\newtheorem{proposition}{Proposition}
\newtheorem{theorem}{Theorem}
\newtheorem{corollary}{Corollary}
\newtheorem{lemma}{Lemma}
\newtheorem{definition}{Definition}
\theoremstyle{remark}
\newcommand{\supp}{\textnormal{supp}}
\begin{document} \date{}
\allowdisplaybreaks
\IEEEoverridecommandlockouts

\title{Sparse Combinatorial Group Testing}

\author{\IEEEauthorblockN{Huseyin A. Inan, Peter Kairouz, and Ayfer Ozgur \thanks{This work was presented in part at Allerton 2017. This work was supported in part by NSF grant \#1514538.} } }

\maketitle

\begin{abstract}
In combinatorial group testing, the primary objective is to fully identify the set of at most $d$ defective items from a pool of $n$ items using as few tests as possible. The celebrated result for the combinatorial group testing problem is that the number of tests $t$ can be made logarithmic in $n$ when $d = O( \textnormal{poly}(\log n) )$. However, state-of-the-art group testing codes require the items to be tested $w = \Omega\left(d \log n\right)$ times and tests to include $\rho = \Omega\left(\frac{n}{d}\right)$ items (within logarithmic factors). In many emerging applications, items can only participate in a limited number of tests and tests are constrained to include a limited number of items.

In this paper, we study the ``sparse'' regime for the group testing problem where we restrict the number of tests each item can participate in by $w_{\max}$ or the number of items each test can include by $\rho_{\max}$ in both noiseless and noisy settings. These constraints lead to a largely unexplored regime where $t$ is a fractional power of $n$, rather than logarithmic in $n$ as in the classical setting. Our results characterize the number of tests $t$ needed in this regime as a function of $w_{\max}$ or $\rho_{\max}$ and show, for example, that $t$ decreases drastically when $w_{\max}$ is increased beyond a bare minimum. In particular, it is easy to see in the noiseless case that if $w_{\max} \leq d$, then we must have $t=n$, i.e., testing every item individually is optimal. We show that if $w_{\max}=d+1$, the number of tests decreases suddenly from $t=n$ to $t = \Theta(d \sqrt{n})$. The order-optimal construction is obtained via a modification of the classical Kautz-Singleton construction, which is known to be suboptimal for the classical group testing problem. For the more general case, when $w_{\max}=ld+1$ for integer $l>1$, the modified Kautz-Singleton construction requires  $t = \Theta(d n^{\frac{1}{l+1}})$ tests, which we prove to be near order-optimal. We also show that our constructions have a favorable encoding and decoding complexity, i.e. they can be decoded in  $(\textnormal{poly}(d) + O(t))$-time and each entry in any codeword can be computed in space $\textnormal{poly}(\log n)$. We finally discuss an application of our results to the construction of energy-limited random access schemes for IoT networks, which provided the initial motivation for our work.
\end{abstract}

\section{Introduction}

Group testing is a subfield of combinatorial mathematics that studies how to identify a set of $d$ (or less) defective items from a large population of size $n$. For an unknown sequence ${x} \in \{0,1\}^n$ with at most $d$ ones representing the defective items, we are allowed to test any subset $S \subseteq \{1, \ldots, n\}$ of the items. The result of a test $S$ could either be positive, which happens when at least one item in $S$ is defective (i.e., $\exists~ i \in S$ such that $x_i = 1$), or negative when all the items in $S$ are not defective (i.e., $\forall i \in S$ we have $x_i = 0$). The goal is to design as few tests as possible (say $t$ tests) so that we can exactly recover the unknown sequence ${x}$.

The original group testing framework was developed in 1943 by Robert Dorfman \cite{dorfman1943detection}. Back then, group testing was devised to identify which WWII draftees were infected with syphilis -- without having to test them individually. In Dorfman's application, items represented draftees and tests represented actual blood tests. Over the years, group testing has found numerous applications in an array of exciting fields spanning biology \cite{chen2008decoding}, medicine \cite{ganesan2017learning}, machine learning \cite{malioutov2013exact}, data analysis \cite{gilbert2008group}, computer science \cite{goodrich2005indexing}, and signal processing \cite{emad2014poisson}. In addition, group testing has been extensively applied to various disciplines of wireless communication, such as multiple access control protocols \cite{kautz1964,Berger1984,wolf1985born,Fletcher2009} and neighborhood discovery \cite{luo2008neighbor}.

The celebrated result for the group testing problem is that $t$ can be made logarithmic in $n$. One of the earliest explicit group testing constructions is due to Kautz and Singleton and requires $t = O(d^2 \log_d^2 n)$ tests \cite{kautz1964}.  This construction uses a Reed-Solomon code concatenated with a non-linear identity code and it matches the best known lower bound $\Omega(d^2 \log_d n)$ \cite{d1982bounds, furedi1996onr} in the regime where $d = \Theta(n^{\alpha})$ for some $\alpha \in (0, 1)$.  More recently, a different explicit construction achieving $t = O(d^2 \log n)$ was introduced by Porat and Rothschild in \cite{porat2008}, which outperforms Kautz and Singleton's construction in the regime  where $d = O(\textnormal{poly}(\log n))$. 

These results imply that group testing can provide drastic gains when $d\ll n$, say $d = O(\textnormal{poly}(\log n))$, compared to the naive approach of testing every item individually which results in $t=n$ total number of tests. However, it can be shown that these constructions require each item to participate in $w = \Omega\left(\frac{d \log n}{\log d + \log \log n} \right)$ tests and each test to include $\rho = \Omega\left(\frac{n}{d \log_d n}\right)$ items. In many applications, the total number of tests that can be performed on each item or the number of items each test can include can be limited due to different reasons. For example, the amount of blood or genetic material available from an individual can limit the number of tests that this individual can participate in. Similarly, equipment limitations and testing procedures can impose a maximum on the number of samples that can be simultaneously tested. For example, combining too many blood samples in one test often increases the misdetection probability of the targeted disease. Our interest in the group testing problem was mainly motivated by its applications in wireless communication, in particular the use of group testing codes for constructing random access schemes for the Internet of Things (IoT) networks, as we discuss in more detail in the last section of the paper. Here items represent sensor nodes and tests represent binary transmissions over a common channel, and the energy constraint at the physical layer can be translated to a constraint on the number of tests applied to each item in the group testing framework.

Motivated by these observations, in this paper we study the group testing problem when the number of tests each item can participate in is restricted by $w_{\max}$ or when the number of items each test can include is restricted by $\rho_{\max}$. In the noiseless case, it is not difficult to show that if $w_{\max} \leq d$ or if $\rho_{\max}\leq d + 1$, then we need  $t=n$ tests, i.e., testing every item individually is optimal. A natural question then is: how does $t$ decrease as we increase $w_{\max}$ and $\rho_{\max}$ beyond these bare minimums (up to their values in state-of-the-art constructions)? In particular, can we slightly increase $w_{\max}$ and $\rho_{\max}$ beyond $d$ and $d+1$, respectively, and significantly reduce $t$, the number of tests needed? The answer turns out to be positive when we have an $w_{\max}$ constraint but not so for the case with a $\rho_{\max}$ constraint.

We show that when $w_{\max}=d+1$, the number of tests decreases drastically from $t=n$ to $t = (d+1)\sqrt{n}$. More generally, if $w_{\max}=ld+1$ for any positive integer $l$ such that $ld+1 \leq
\sqrt[l+1]{n}$, we can achieve
$$t=(ld+1)n^{\frac{1}{l+1}}.$$
This implies that the fractional power of $n$ can be reduced drastically when $w_{\max}$ is increased as a multiple of $d$. Note that this result is most significant when $d = O(\textnormal{poly}(\log n))$. We achieve this performance by introducing a simple modification of Kautz and Singleton's construction, which shows that the field size in this construction can be used to trade between $t$ and $w_{max}$. We then prove a nearly matching lower bound which shows that $$t = \Omega(d^{\frac{2}{l+1}}n^{\frac{1}{l+1}}).$$
In particular when $w_{\max} = d + 1$, this shows that Kautz and Singleton's construction is  order-optimal (to an almost matching constant). This is somewhat surprising given that Kautz and Singleton's construction is strictly suboptimal in the classical group testing setting when $d = O(\textnormal{poly}(\log n))$. 

As opposed to the case with an $w_{\max}$ constraint, the decrease in $t$ is much less dramatic with increasing $\rho_{max}$. We prove that
for any $\rho_{\max}$,
\begin{align*}
t \geq \dfrac{(d+1) n}{\rho_{\max}}
\end{align*}
in the noiseless case. In other words, $\rho_{\max}$ needs to be a fractional power of $n$, in order for $t$ to scale sublinearly in $n$. When $\rho_{\max} = \Theta(n^{\alpha})$, we prove that there exists a construction with $t = \Theta(d n^{1-\alpha})$. We also show that when $\alpha = l/(l+1)$ for any fixed integer $l \geq 1$, Kautz and Singleton's construction can be used to achieve the order-optimal $t = (l d + 1) \sqrt[l+1]{n}$ tests. For the special case $l=1$, it is interesting to note that this constructions matches the lower bound with exact constants, therefore, the Kautz-Singleton construction is exactly optimal. We further extend these results to the noisy case  when a certain number of tests can have faulty outcomes. We also prove that our constructions can be decoded in  $(\textnormal{poly}(d) + O(t))$-time and each entry in any codeword can be computed in space $\textnormal{poly}(\log n)$. This shows that these constructions not only (nearly) achieve the fundamental lower bounds, but also have a favorable encoding and decoding complexity.

\subsection{Comparison with Prior Work}

To the best of our knowledge, our problem formulation is novel and has not been widely explored in the group testing literature. The only exception is a recent paper by Gandikota et al.  \cite{gandikota2016sparse}. However, Gandikota et al. focus (for the most part) on statistical approaches that provide lower and upper bounds on the number of tests when the defective set can be recovered with $\epsilon$-error for some arbitrarily small but positive $\epsilon>0$, while our approach is purely combinatorial, i.e. we aim to recover the defective set with $0$-error. Gandikota et al use information theoretic techniques based on Fano's inequality to prove lower bounds for the $\epsilon$-error case and provide randomized constructions, while our lower bounds are combinatorial and our constructions are explicit. For explicit constructions, their paper refers to an earlier paper by Macula \cite{macula1996constant}, however, the construction in \cite{macula1996constant} is highly suboptimal as we discuss next.

\begin{table}[h]

\begin{center}
\begin{tabular}{|p{3cm}|c|c|c|c|}
\hline
\multirow{2}{*}{Setting} & \multicolumn{2}{ |c| }{This paper} & \multicolumn{2}{ |c| }{Gandikota et al.  \cite{gandikota2016sparse}}  \\ \cline{2-5}
& Lower Bound & Upper Bound & Lower Bound & Upper Bound  \\ \cline{1-5}
\multirow{3}{3cm}{ }  & & & & \\ Sparse Codewords  with $w = ld+1$ &
$\Omega\left( d^{\frac{2}{l+1}} \sqrt[l+1]{n}\right) $
& $O\left( d \sqrt[l+1]{n}\right) $ & $\Omega\left( d^2 (\frac{n}{d})^{\frac{1}{ld+1}}\right) $ & $O\left( d^2 (n \left( \frac{n}{d})^d\right) ^{\frac{1}{ld+1}}\right) $
 \\ & & (Explicit) & & (Randomized) \\ \hline
\multirow{3}{3cm}{ }  & & & &    \\
 Sparse Tests  with $\rho = n^{\frac{l}{l+1}}$ & $\Omega\left( d \sqrt[l+1]{n}\right) $
 & $O\left( d \sqrt[l+1]{n}\right) $
 & $\Omega\left( \sqrt[l+1]{n} \frac{\log \left( n/d \right) }{\log \left( \sqrt[l+1]{n}/d \right) } \right)$
 & $ O\left( \sqrt[l+1]{n} \log \left( n ( \frac{n}{d} )^d \right)  \right)$
  \\ & & (Explicit) & & (Randomized)\\ \hline
\end{tabular}
\end{center}
\vspace{3pt}
\caption{Comparison of sparse group testing results.}
\label{t1}
\end{table}

Table \ref{t1} compares the results provided in this paper with the ones presented in \cite{gandikota2016sparse}.
When the weights of the columns are constrained with $w_{\max} = ld+1$, the dependence  on $n$ in our lower bound is $\sqrt[l+1]{n}$ whereas \cite{gandikota2016sparse} provides $\sqrt[ld+1]{n}$ which is significantly weaker. In terms of upper bounds, we provide an explicit construction achieving $O\left( d \sqrt[l+1]{n}\right)$ which is better than the randomized construction in  \cite{gandikota2016sparse}. Note for example that when $l=1$ our upper bound gives $d\sqrt{n}$ while their upper bound gives $d^{1+\frac{2}{d+1}}n$. The explicit construction \cite{macula1996constant}, referred to in \cite{gandikota2016sparse}, provides an upper bound of $O( n^{\frac{1}{(ld+1)^{1/d}} })$ which is substantially weaker than the results in this paper.

When the weights of the rows are constrained to be less than or equal to $\rho_{\max} = n^{\frac{l}{l+1}}$, one can observe that the lower bound presented in this paper achieves $\Omega\left( d \sqrt[l+1]{n}\right)$, which is better by a factor of $d$ when compared to the $\Omega\left(\sqrt[l+1]{n}\right)$ lower bound provided in \cite{gandikota2016sparse}. In terms of the upper bound, we provide an explicit construction that achieves
$O\left( d \sqrt[l+1]{n}\right)$ which matches our lower bound. On the other hand, the randomized construction in \cite{gandikota2016sparse} is off by a factor of $\log(n/d)$. The explicit construction \cite{macula1996constant} referred to in \cite{gandikota2016sparse} provides an upper bound of $O\left( \sqrt[l+1]{n} (l+1)^d  \right) $ which has an exponential term in $d$. More recently, we presented a sparse group testing framework for the stochastic setting in \cite{inan18isit}.

\subsection{Paper Organization}
The remainder of this paper is organized as follows. In Section \ref{sec:prior}, we present the needed prerequisite material and describe two common combinatorial group testing constructions. 
The main results of our paper are formally presented and proved in Section \ref{sec:mainres}. In Section \ref{sec:decoding}, we discuss the encoding and decoding complexities of the explicit constructions we present in Section \ref{sec:mainres}. In Section \ref{sec:application}, we discuss an application of our results to wireless random access which provided the original motivaiton for our work. 
Finally, we conclude our paper in Section \ref{sec:conclusion} by noting a few interesting and nontrivial extensions.

\section{Preliminaries}
\label{sec:prior}
For a $t \times n$ binary matrix ${M}$, we use ${M}_j$ to refer to its $j$'th column and ${M}_{ij}$ to refer to its $(i, j)$'th entry. For an integer $m \geq 1$, we denote the set $\{1, \ldots, m\}$ by $[m]$. The support of a column $M_j$ is denoted as $\textnormal{supp}(M_j) \coloneqq \{ i :  {M}_{ij} = 1\}$. We say that a binary column ${M}_i$ \textit{covers} a binary column ${M}_j$ if  $M_i \vee M_j = M_i$, or equivalently $\supp(M_j) \subseteq \supp(M_i)$. The Hamming weight of a row or a column of ${M}$ will be simply referred to as the \textit{weight} of the row or column and $w_j$ represents the weight of $j$'th column.


\subsection{Non-adaptive Combinatorial Group Testing}

Our paper focuses on non-adaptive combinatorial group testing (CGT). Non-adaptive refers to the fact that the tests are designed and fixed a priori, in constrast to the adaptive case, where the tests are designed sequentially, meaning that the $j^{th}$ test is a function of the outcomes of the $j-1$ previous tests. Combinatorial refers to the fact that we want our group testing schemes to  recover the set of defective items with zero-error, in contrast to the probabilistic approach which allows for a  small probability of error. A non-adaptive CGT strategy can be represented by a $t \times n$ binary matrix ${M}$, where $M_{ij} = 1$ indicates that item $j$ participates in test $i$. We will occasionally refer to $M$ as a group testing code (or codebook) and its $i^{th}$ column $M_i$ as the $i^{th}$ codeword.  
A necessary condition for the design of a non-adaptive CGT strategy ${M}$ is that of \textit{separability}. A matrix ${M}$ is $d$-\textit{separable} if for any $x_1 \neq x_2$, $d$-sparse vectors,  we have that $Mx_1 \neq Mx_2$.
Unfortunately, the $d$-separability condition does not lead to tractable, explicit, and efficiently decodable constructions of $M$ for an arbitrary value of $n$. To circumvent this issue, a stronger condition on ${M}$ is needed. This condition is known as $d$-\textit{disjunctiveness} \cite{du2000combinatorial}.
We first revisit the definition of $d$-{disjunctiveness} \cite{du2000combinatorial}.
\begin{definition}
A $t \times n$ binary matrix $M$ is called $d$-\textit{disjunct} if any Boolean sum of up to $d$ columns of ${M}$ does not cover any other column not included in the sum.
\end{definition}
The $d$-disjunctiveness  property ensures that we can recover up to $d$ columns from their Boolean sum. This can be naively done using the \textit{cover decoder}. The cover decoder simply scans through the columns of ${M}$, and checks whether or not the test results vector $Y$ covers a particular column. If column $i$ is covered by $Y$, then item $i$ is declared defective. When $M$ is $d$-disjunct, the cover decoder succeeds at identifying all the defective items, while achieving a zero false positive rate. Interestingly, one can also show that $(d+1)$-separability implies $d$-disjunctiveness \cite{du2000combinatorial}. Therefore, even though disjunctiveness is stronger than separability, the two conditions are essentially equivalent.

We define $t(d, n)$ to be the smallest $t$ needed for a binary $t \times n$ matrix $M$ to be $d$-disjunct. Notice that naturally, $t(d, n) \leq n$ because we can always use the identity matrix ${M} = I_n$ to identify any $1 \leq d \leq n$ defectives among $n$ items. A classical result in the non-adaptive combinatorial group testing literature shows that $t(d,n) = \Omega({d^2}\log_d n)$ \cite{d1982bounds,furedi1996onr}. Several explicit and randomized constructions of $d$-disjunct matrices have been developed over the past 50 years with the most efficient (when $d = O(\textnormal{poly}(\log n))$) known constructions achieved $t=O(d^2\log n)$ \cite{porat2008, du2000combinatorial,alon2006algorithmic}.  

\subsection{Relevant Lower Bounds}
We now summarize two known lower bounds on the minimum number of tests. These bounds imply that individual testing is necessary whenever $d = \Omega(\sqrt{n})$ or $w_{\max} \leq d$, where $w_{\max}$ is the maximum number of tests an item participates in (or equivalently, the maximum column weight).

\begin{proposition}
\label{prop:larged}
For all $n$ and $d$, the following bound on $t(d, n)$ holds
\begin{align}
t(d, n) \geq \min \left\lbrace \dbinom{d+2}{2}, n
 \right\rbrace.
 \label{simpleLB}
\end{align}
\end{proposition}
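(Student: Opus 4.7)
The plan is to separate into two regimes based on $n$. When $n \leq d+1$, applying $d$-disjunctness to each column $j$ together with the subset of the remaining $n-1 \leq d$ columns shows $B_j \not\subseteq \bigcup_{k \neq j} B_k$, so column $j$ has a private row (a row where only $j$ has a $1$ among the $n$ columns). These $n$ private rows are pairwise distinct, giving $t \geq n$, which matches the bound since $n \leq \binom{d+2}{2}$ in this range.

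For the harder regime $n \geq d+2$, the target is $t \geq \binom{d+2}{2}$. I would fix any $d+2$ columns $c_1,\dots,c_{d+2}$. For each ordered pair $(i,j)$ with $i \neq j$, the $d$-subset $T_{ij} := \{c_k : k \in [d+2]\setminus\{i,j\}\}$ cannot cover $c_i$ by $d$-disjunctness, so there exists a witness row $r_{ij}$ with $M_{r_{ij},c_i} = 1$ and $M_{r_{ij},c_k} = 0$ for $k \in T_{ij}$. Restricted to the chosen $d+2$ columns, $r_{ij}$ must exhibit pattern $\{i\}$ or $\{i,j\}$. Let $A \subseteq [d+2]$ collect the indices $i$ for which some row has restricted pattern $\{i\}$, and let $B$ collect the pairs $\{i,j\}$ for which some row has restricted pattern $\{i,j\}$. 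The witness condition forces every pair $\{i,j\} \subseteq [d+2]$ to either lie in $B$ or be fully contained in $A$, giving $|B| \geq \binom{d+2}{2} - \binom{|A|}{2}$. Since distinct restricted patterns require distinct rows, $t \geq |A| + |B| \geq \binom{d+2}{2} + |A| - \binom{|A|}{2}$, which is at least $\binom{d+2}{2}$ precisely when $|A| \leq 3$.

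The main obstacle is ruling out configurations with $|A| \geq 4$. I would close this gap by coupling the local submatrix argument with the global double-counting estimate obtained by summing the column-wise witness requirement $\sum_{r \in B_j} \binom{n - n_r}{d} \geq \binom{n-1}{d}$ over all $n$ columns, which yields $\sum_{r=1}^{t} n_r \binom{n - n_r}{d} \geq n \binom{n-1}{d}$, where $n_r$ is the weight of row $r$. Combined with the identity $\sum_r n_r = \sum_j w_j$ and the per-column weight lower bounds derived from the same inequality applied pointwise, integrality arguments rule out row and column weight profiles compatible with $t < \binom{d+2}{2}$. Equivalently, averaging $|A|$ over all $\binom{n}{d+2}$ choices of the $d+2$ columns, controlled by the quantity $\sum_r n_r \binom{n - n_r}{d+1}$ via a pigeonhole/Markov step, should produce a subset with $|A| \leq 3$, completing the argument.
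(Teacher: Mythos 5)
The paper does not actually give a proof of Proposition \ref{prop:larged}; it cites D'yachkov and Rykov, so your attempt can only be judged on its own merits. Your first regime ($n \leq d+1$) is correct. The problem is the second regime: for $n \geq d+2$ you set the target $t \geq \binom{d+2}{2}$, but this is strictly stronger than the proposition and is in fact false whenever $d+2 \leq n < \binom{d+2}{2}$, since the identity matrix $I_n$ is $d$-disjunct with $t = n < \binom{d+2}{2}$ (the statement only claims the minimum of the two quantities). The same example shows that the obstacle you flag, configurations with $|A| \geq 4$, cannot be removed by the global double-counting or averaging step you sketch: for $M = I_n$, \emph{every} choice of $d+2$ columns has $A = [d+2]$ and $B = \emptyset$, so no pigeonhole over the $\binom{n}{d+2}$ subsets, and no inequality of the form $\sum_r n_r \binom{n-n_r}{d} \geq n\binom{n-1}{d}$ (which $I_n$ satisfies), can ever produce a subset with $|A| \leq 3$. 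The sentence ending in ``should produce a subset with $|A| \leq 3$'' is therefore not a fixable technical gap but a step that provably fails as stated: any correct argument must first split off the case in which every column has a private row (which already yields $t \geq n$, one branch of the minimum) and then exploit the existence of a column with no private row; your local witness argument never uses such a column, so it cannot distinguish $I_n$ from a matrix that genuinely requires $\binom{d+2}{2}$ rows.

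A standard way to complete the second branch, in the spirit of the cited D'yachkov--Rykov bound and of the private-row bookkeeping the paper uses in Theorem \ref{thm:sparsecodewords_special}, is by induction on $d$: if $t < n$, some column $C_0$ has no private row, hence $w(C_0) \geq d+1$ (otherwise one covering column per $1$-entry of $C_0$ gives at most $d$ columns covering it); deleting $C_0$ together with all rows in its support leaves a $(d-1)$-disjunct matrix on $n-1$ columns, so $t \geq (d+1) + \min\left\lbrace \binom{d+1}{2}, n-1 \right\rbrace \geq \min\left\lbrace \binom{d+2}{2}, n \right\rbrace$. Your fixed-$(d+2)$-columns pattern analysis is sound as far as it goes, but without such a reduction it does not yield the proposition.
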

Proposition \ref{prop:larged} suggests that we need $d = O(\sqrt{n})$ to be able to design a $d$-disjunct matrix with $t < n$. 

\begin{proposition}
\label{prop:smallw}
If $w_{\max} \leq d$, then $t(d,n) = n$.
\end{proposition}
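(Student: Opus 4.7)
The plan is to show that any $d$-disjunct matrix $M$ with all column weights at most $d$ must have at least $n$ rows; combined with the trivial upper bound $t(d,n) \leq n$ from the identity construction, this forces $t(d,n) = n$.

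The key idea is to exhibit, for each column $j$, a \emph{private row} $i_j \in \supp(M_j)$ at which no other column contains a $1$, i.e., $M_{i_j, k} = 0$ for all $k \neq j$. Once such a private row is produced for every column, the map $j \mapsto i_j$ is injective (because $i_j \in \supp(M_j)$ but $i_j \notin \supp(M_k)$ whenever $k \neq j$), which immediately gives $n \leq t$.

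To exhibit the private row, I will argue by contradiction. Fix $j$ and suppose every row $i \in \supp(M_j)$ contains a $1$ in some column $k_i \neq j$. Collect these witnesses into $S = \{k_i : i \in \supp(M_j)\}$. Then $|S| \leq w_j \leq d$ and $j \notin S$, so by $d$-disjunctness the Boolean sum of the columns indexed by $S$ does not cover $M_j$; that is, there exists $i^{\ast} \in \supp(M_j)$ with $M_{i^{\ast}, k} = 0$ for every $k \in S$. But by the defining property of $k_{i^{\ast}}$ we have $M_{i^{\ast}, k_{i^{\ast}}} = 1$ and $k_{i^{\ast}} \in S$, a contradiction.

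This is essentially a pigeonhole argument, and the only thing to notice is that the quantifications line up exactly: the set of rows in $\supp(M_j)$ that one would need to ``hit'' has size at most $w_j \leq d$, which matches the budget of columns allowed in the disjunctness condition. I expect no real obstacle beyond this matching; the degenerate case $w_j = 0$ is vacuous because an all-zero column is covered by any other column and already violates $1$-disjunctness whenever $n \geq 2$.
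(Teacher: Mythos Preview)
Your proof is correct and follows essentially the same private-row argument that the paper attributes to D'yachkov and Rykov and later spells out verbatim inside the proof of Theorem~\ref{thm:sparsecodewords_special}: a column of weight at most $d$ must have a private row, otherwise at most $d$ other columns would cover it, contradicting $d$-disjunctness. Your handling of the injectivity of $j\mapsto i_j$ and the degenerate $w_j=0$ case is fine.
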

The above proposition shows that one cannot do better than individual testing when the maximum number of tests an item can participate in is less than or equal to $d$. 

The proofs of these proposition are due to D'yachkov and Rykov, and can be found in \cite{d1982bounds, rykov}.

\subsection{Disjunct Matrices via Error Correcting Codes}
A $q$-nary error-correcting code is a code whose codewords consist of $q$ basic symbols \cite{eec}. Binary codes are a special case of $q$-nary codes with $q = 2$. Consider a $q$-nary code with $n = q^k$ codewords of length $t = k + r$. Denoting the minimum distance between the codewords as $d_{\min}$, one can show that $d_{\min} \leq r+1$ from the following observation. Fix any $k$ positions in the codewords. If any two codewords have the same symbols in these positions, then it must be the case that $d_{\min} \leq r$. Otherwise, we must observe all possible $q^k$ sequences in the $k$ fixed positions. In this case, some of the codewords will differ by only one position on the fixed $k$ positions. Hence, $d_{\min} \leq r+1$. We state this formally in the following theorem \cite{mds}.

\begin{theorem}
A $q$-nary code with $n = q^k$ codewords of length $t = k+r$ must satisfy $d_{\min} \leq r+1$.
\end{theorem}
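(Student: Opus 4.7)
The plan is to mimic the argument already hinted at in the paragraph preceding the theorem, which is really just the standard Singleton bound dressed up as a pigeonhole-style dichotomy. I would fix any $k$ coordinate positions out of the $t = k+r$, and study the projection map $\pi$ that sends each codeword to its restriction on those $k$ chosen coordinates. The codomain of $\pi$ has exactly $q^k = n$ elements, so there are only two scenarios for $\pi$ acting on the $n$ codewords.

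In the first scenario, $\pi$ fails to be injective: some two distinct codewords $c, c'$ agree on all $k$ fixed coordinates. Then they can differ only on the remaining $r$ coordinates, so their Hamming distance is at most $r$, giving $d_{\min} \leq r < r+1$. In the second scenario, $\pi$ is injective, hence a bijection onto the full alphabet cube $\{0, 1, \ldots, q-1\}^k$. The key step here is to pick two codewords whose projections are ``neighbors'' in this cube, i.e. differ in exactly one coordinate (for instance, take any projected vector and flip one symbol; the preimage exists by surjectivity). These two codewords agree on $k-1$ of the fixed coordinates and can disagree on the remaining $r$ coordinates, so their Hamming distance is at most $1 + r = r+1$.

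Combining the two cases yields $d_{\min} \leq r+1$ in either scenario, which is the desired bound. I do not expect a real obstacle: the only place where one has to be careful is the second case, where one must explicitly produce the two ``neighboring'' codewords rather than just invoke pigeonhole. The surjectivity of $\pi$ in that case makes this immediate, so the whole proof is essentially a two-line case split and does not require any auxiliary lemma beyond the definition of Hamming distance.
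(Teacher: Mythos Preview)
Your proposal is correct and is essentially identical to the paper's own argument: the paper also fixes $k$ positions, splits into the two cases of whether two codewords coincide there (giving $d_{\min}\le r$) or all $q^k$ patterns appear (giving two codewords that differ in one fixed position, hence $d_{\min}\le r+1$). Your write-up is simply a slightly more explicit version of the same two-case Singleton-bound argument.
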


Codes with $d_{\min} = r+1$ and $n = q^k$ are called maximum distance separable (MDS) codes \cite{mds}.  Reed-Solomon codes \cite{RS} are a known class of MDS codes with the constraint that $q \geq t$. When concatenated with a nonlinear code, Reed-Solomon codes lead to $d$-disjunct group testing codes. In what follows, we will use the subscript $q$ in the parameters of the Reed-Solomon codes to separate them from the group testing codes that will be constructed shortly. To recap, Reed-Solomon codes achieve a minimum distance of $d_{\min}= r_q + 1$ with a code length of $t_q = k_q + r_q$ and a number of codewords equal to
$n_q = q^{k_q}$, provided that $q \geq t_q$ and $q$ is a prime power.

We can convert a Reed-Solomon code into a group testing code using the following method introduced by Kautz and Singleton in
\cite{kautz1964}. We replace each codeword symbol $i \in \{ 1, 2, \ldots, q \}$ by $e_{i}$, a length-$q$ binary sequence with a single nonzero entry in the $i^{th}$ position. Thus, a Reed-Solomon code is transformed into a binary code of length $t = q t_q$ by concatenating it with the ``identity code''. The minimum distance of the resultant binary code is double that of the Reed-Solomon code; i.e., $d_{\min} = 2(r_q + 1)$. This is because any two distinct $q$-nary symbols will differ in two positions in their corresponding length $q$ binary sequences. Note that the number of codewords remains the same $n = n_q = q^{k_q}$,  and all the binary codewords have the same weight $w = t_q$. This construction will be referred to as the Kautz-Singleton construction.

Consider a binary code $M$ with minimum codeword weight of $w_{\min}$. We define $\lambda_{\max} $ to be the maximum number of overlapping ones between any two codewords in $M$. In the coding theory literature, $\lambda_{\max}$ is commonly referred to as the maximal correlation of $M$.
A central result in group testing demonstrates that $M$ is $d$-{disjunct} as long as $\lambda_{\max} d + 1 \leq w_{\min}$. This can be seen from the following simple argument. Take any $d+1$ codewords and fix one codeword among them. The number of overlapping ones between the fixed codeword and the rest of the codewords is at most $d \lambda_{\max}$. Since the minimum weight satisfies $w_{\min} \geq d \lambda_{\max} + 1$, this codeword cannot be covered by the rest of the codewords. Thus, $M$ must be at least $d$-disjunct. We state this formally in the following lemma.
\begin{lemma}
A binary code $M$ with codewords of minimum weight $w_{\min}$ and maximal correlation $\lambda_{\max}$ is $\left\lfloor\frac{w_{\min}-1}{\lambda_{\max}} \right \rfloor$-disjunct.
\end{lemma}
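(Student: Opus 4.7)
The plan is to formalize the argument already sketched in the paragraph preceding the lemma. Set $d \coloneqq \lfloor (w_{\min}-1)/\lambda_{\max}\rfloor$; I must show that for any $d+1$ distinct columns of $M$, the Boolean OR of any $d$ of them fails to cover the remaining one. This will establish $d$-disjunctness by the definition given earlier.

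First I would pick an arbitrary target column $M_j$ together with any $d$ other columns $M_{i_1},\ldots,M_{i_d}$ and consider the set $\supp(M_j)\setminus\bigcup_{k=1}^{d}\supp(M_{i_k})$; the goal is to show this set is non-empty, since $M_j$ is covered by $M_{i_1}\vee\cdots\vee M_{i_d}$ if and only if $\supp(M_j)\subseteq\bigcup_k \supp(M_{i_k})$. By inclusion–exclusion (just the union bound direction),
\[
\bigl|\supp(M_j)\cap \bigcup_{k=1}^{d}\supp(M_{i_k})\bigr| \;\le\; \sum_{k=1}^{d}\bigl|\supp(M_j)\cap\supp(M_{i_k})\bigr| \;\le\; d\,\lambda_{\max},
\]
where the final inequality uses the definition of $\lambda_{\max}$ as the maximum pairwise correlation between distinct codewords. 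Hence the number of positions in $\supp(M_j)$ not covered by the others is at least $w_j - d\lambda_{\max} \ge w_{\min}- d\lambda_{\max}$.

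Next I would invoke the choice of $d$: by definition of the floor, $d\lambda_{\max}\le w_{\min}-1$, so $w_{\min}-d\lambda_{\max}\ge 1>0$. Therefore $\supp(M_j)$ contains at least one index that lies outside every $\supp(M_{i_k})$, meaning $M_j$ is \emph{not} covered by $M_{i_1}\vee\cdots\vee M_{i_d}$. Since $M_j$ and the $d$ other columns were arbitrary among any $d+1$ columns of $M$, this is exactly the $d$-disjunctness property.

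Honestly there is no genuine obstacle here: the only content is the union bound on pairwise overlaps together with the arithmetic that the floor gives $d\lambda_{\max}\le w_{\min}-1$. The only thing worth being careful about is that the statement concerns $w_{\min}$ rather than $w_j$, which is handled by the trivial bound $w_j\ge w_{\min}$, and that disjunctness is defined with respect to arbitrary $d$ columns, not a fixed set, which is why the argument must hold uniformly over the choice of target column and the choice of the other $d$ columns.
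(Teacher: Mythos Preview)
Your proposal is correct and follows essentially the same approach as the paper: fix a target codeword among $d+1$, bound the total overlap with the other $d$ codewords by $d\lambda_{\max}$ via the union bound, and use $d\lambda_{\max}\le w_{\min}-1$ (equivalently $w_{\min}\ge d\lambda_{\max}+1$) to conclude the target is not covered. The only difference is that you spell out the floor arithmetic and the $w_j\ge w_{\min}$ step explicitly, which the paper leaves implicit.
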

Observe that in the Kautz-Singleton construction, we have that
\begin{align*}
\lambda_{\max} = w - d_{\min}/2 =  t_q - r_q - 1
= k_q - 1.
\end{align*}
Therefore, the Kautz-Singleton construction provides us with a group testing code
that is $\left\lfloor\frac{t_q-1}{k_q -1} \right \rfloor$-{disjunct}.

\begin{theorem}
The Kautz-Singleton construction provides a $t \times n$ $d$-{disjunct} matrix where $t = O(d^2 \log_d^2 n)$
with constant column weight $w = \Omega\left( \frac{d \log n}{\log d + \log \log n} \right)$ and constant row weight $\rho = \Omega\left( \frac{n}{d \log_d n}\right) $.
\end{theorem}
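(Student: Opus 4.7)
The plan is to choose Reed--Solomon parameters $(k_q, t_q, q)$ for the Kautz--Singleton construction so that the resulting binary matrix is $d$-disjunct with the claimed test, column-weight, and row-weight bounds. Since the previous lemma together with the computation $\lambda_{\max} = k_q - 1$ established in the preceding paragraph shows that the construction is $\lfloor (t_q-1)/(k_q-1)\rfloor$-disjunct, the disjunctness requirement becomes
\[
t_q \geq d(k_q - 1) + 1,
\]
so I would set $t_q := d(k_q-1)+1$. The remaining freedom is then the choice of $k_q$ (equivalently $q$), which must satisfy $q \geq t_q$ and $q^{k_q} \geq n$, and $q$ must be a prime power.

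Next I would solve these constraints asymptotically. Setting $q = \Theta(t_q) = \Theta(d k_q)$ and $k_q = \lceil \log_q n \rceil$ leads to the implicit equation $\log q = \Theta(\log d + \log k_q) = \Theta(\log d + \log\log n - \log\log q)$, from which $\log q = \Theta(\log d + \log\log n)$ and hence
\[
k_q \;=\; \Theta\!\left(\frac{\log n}{\log d + \log\log n}\right), \qquad t_q \;=\; \Theta\!\left(\frac{d\log n}{\log d + \log\log n}\right).
\]
The existence of a prime power $q$ in the interval $[t_q, 2t_q]$ (which suffices for all the bounds up to constants) follows from Bertrand's postulate, so this step costs only a constant factor.

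With these parameters in hand, the three claimed properties follow. Every codeword has exactly $t_q$ ones by construction of the identity concatenation, so the column weight is the constant $w = t_q = \Omega(d\log n/(\log d + \log\log n))$. The total number of rows is $t = q\, t_q = \Theta(t_q^2)$, giving
\[
t \;=\; \Theta\!\left(\frac{d^2 \log^2 n}{(\log d + \log\log n)^2}\right) \;=\; O(d^2 \log_d^2 n),
\]
where the last inequality uses $\log d + \log\log n \geq \log d$. For the row weight, I would invoke the fact that in a Reed--Solomon code of dimension $k_q$ over $\mathbb{F}_q$, evaluation at any fixed coordinate is a surjective linear map from $\mathbb{F}_q^{k_q}$ onto $\mathbb{F}_q$, so each symbol in $\mathbb{F}_q$ is attained by exactly $q^{k_q-1} = n/q$ codewords at that coordinate. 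After the identity concatenation, this translates into every row having the same Hamming weight
\[
\rho \;=\; \frac{n}{q} \;=\; \Theta\!\left(\frac{n(\log d + \log\log n)}{d\log n}\right) \;=\; \Omega\!\left(\frac{n}{d\log_d n}\right),
\]
again using $\log d + \log\log n \geq \log d$.

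The main obstacle is purely bookkeeping: the three constraints $q^{k_q}\geq n$, $q\geq t_q$, and $t_q\geq d(k_q-1)+1$ interact, and care is needed to ensure that the rounding to a prime power $q$ and the integer ceiling on $k_q$ do not inflate any of the three target quantities by more than a constant factor. This is routine once the asymptotic solution above is fixed, so no genuinely delicate estimate is required.
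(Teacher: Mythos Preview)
Your proposal is correct and follows essentially the same approach as the paper: the paper sets $t_q = q$ exactly and analyzes the resulting relation $q\log q = \Theta(d\log n)$ via the Lambert $W$ function, while you pick $q$ as a prime power in $[t_q,2t_q]$ via Bertrand and solve the implicit equation directly, but the parameter choice and the three resulting computations (for $t$, $w$, and $\rho = n/q$) are identical. The row-weight argument is also the same---the paper phrases it via the MDS property rather than surjectivity of evaluation, and assumes $n = q^{k_q}$ exactly, so your remark about rounding bookkeeping is the only extra care your version needs.
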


\begin{proof}
 To obtain a $d$-{disjunct} code using the Kautz-Singleton construction, we set  $t_q = q$, and choose $q$ and $k_q$ such that
$d = \left\lfloor\frac{q-1}{k_q -1} \right \rfloor$. Note that $n = q^{k_q}$ and $q = \Theta(d k_q)$. Hence, $q = \Theta(d \log_q n)$ or $q \log q = \Theta(d \log n)$. Since $q \geq d$, we get that $q = O(d \log_d n)$. Note that $t = q t_q = q^2$, therefore $t = O(d^2 \log_d^2 n)$. The corresponding binary code has constant column weights $w = t_q = q$. Note that
$q \log q = \Theta(d \log n)$ is related to the famous Lambert $W$ function \cite{lambert} and using $W(x) \geq \log x - \log \log x$, we get $q = \Omega\left(  \frac{d \log n}{\log(d \log n)} \right)$ or equivalently $w = \Omega\left( \frac{d \log n}{\log d + \log \log n}\right) $.
From our earlier discussion on MDS codes, recall that achieving a minimum distance of $r_q + 1$ requires that for any arbitrary $k_q$ rows of this code, the chosen $k_q \times q^{k_q}$ matrix must include all $q^{k_q}$ possible assignments of $q$-nary symbols in the columns. It follows that any row of a Reed-Solomon code must include every $q$-nary symbol an equal number of times. More precisely, each $q$-nary symbol must present
$q^{k_q - 1}$ times in all rows. Therefore,
the corresponding binary code has a constant row weight of $\rho = n/q$. Since $q = O(d \log_d n)$, it follows that $\rho = \Omega\left( \frac{n}{d \log_d n}\right) $.
\end{proof}

A different line of work introduced by Porat and Rothscheld in \cite{porat2008} constructs $t \times n$ $d$-{disjunct} matrices with $t = O( d^2 \log n)$.
Their approach is based on $q$-nary codes that meet the Gilbert-Varshamov bound
where the alphabet size is $q = \Theta(d)$. As in the Kautz-Singleton construction, their inner code is the identity code. The resulting binary code has the property that all the columns have the same weight of $w = \Theta( d \log n)$. Furthermore, the maximum row weight satisfies $\rho_{\max} = \Omega(n / d)$.

\begin{theorem}
The explicit construction by Porat and Rothscheld in \cite{porat2008} achieves a $t \times n$ $d$-{disjunct} matrix where $t = O( d^2 \log n)$ with constant column weight $w = \Theta( d \log n)$ and maximum row weight $\rho_{\max} = \Omega(n / d)$.
\end{theorem}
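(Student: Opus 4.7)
The plan is to follow the template of the Kautz-Singleton proof given earlier, but replace the Reed-Solomon inner code with a $q$-ary code meeting the Gilbert-Varshamov (GV) bound. The key advantage is that the GV bound permits $q = \Theta(d)$ instead of the restrictive $q \geq t_q$ required by Reed-Solomon codes, which is what removes the extra $\log_d n$ factor from the length compared to Kautz-Singleton.

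First I would invoke the GV bound to assert the existence of a $q$-ary code of length $N$ and minimum distance $d_{\min}$ with $n$ codewords, which is guaranteed whenever $n$ times the volume of a Hamming ball of radius $d_{\min}-1$ is strictly less than $q^N$. Bounding the ball volume by the standard entropy inequality $\sum_{i=0}^{d_{\min}-1}\binom{N}{i}(q-1)^i \leq q^{N H_q((d_{\min}-1)/N)}$ (valid when $(d_{\min}-1)/N \leq 1 - 1/q$, where $H_q$ is the $q$-ary entropy function), this reduces to the condition $\log_q n \leq N\bigl(1 - H_q((d_{\min}-1)/N)\bigr)$.

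Next I would fix a prime power $q \in [2d, 4d]$ (which exists by Bertrand's postulate), set $d_{\min} = \lceil N(1 - 1/d)\rceil + 1$, and verify via a Taylor expansion of the $q$-ary entropy around its maximizer $1-1/q$ that $1 - H_q(1 - 1/d) = \Omega(1/d)$ uniformly in $d$; this lets me take $N = \Theta(d \log n)$ so that a $q$-ary code with $n$ codewords and this distance exists. Concatenating with the length-$q$ identity code exactly as in the Kautz-Singleton construction then produces a binary matrix in which each column has constant weight $w = N = \Theta(d \log n)$ and any two columns overlap in at most $\lambda_{\max} = N - d_{\min}$ positions. By the choice of $d_{\min}$, $\lambda_{\max} d + 1 \leq N = w_{\min}$, so the disjunctness lemma from this section yields $d$-disjunctness, and the total number of tests is $t = qN = \Theta(d^2 \log n)$.

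Finally, for the row-weight lower bound I would double-count the number of ones in the matrix: it equals $nw = \Theta(n d \log n)$ and is distributed among $t = qN = \Theta(d^2 \log n)$ rows, so the average row weight is $n/q = \Omega(n/d)$, and hence $\rho_{\max}$ is at least this. The main technical obstacle is the second step, namely carefully expanding $H_q(1-1/d)$ to confirm $1 - H_q(1-1/d) = \Omega(1/d)$ uniformly as $d$ grows while $q = \Theta(d)$; once that entropy estimate is in hand, everything else is a direct adaptation of the Kautz-Singleton argument already spelled out in this section.
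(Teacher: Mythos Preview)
The paper does not actually supply a proof of this theorem: it simply records, in the paragraph preceding the statement, that Porat and Rothschild build the outer code by meeting the Gilbert--Varshamov bound with alphabet size $q=\Theta(d)$, concatenate with the identity inner code, and that the resulting matrix has constant column weight $w=\Theta(d\log n)$ and $\rho_{\max}=\Omega(n/d)$. Your proposal is exactly a fleshed-out version of that description---same outer-code choice, same concatenation, same $\lambda_{\max}\le w_{\min}/d$ check, same double-count for the row weight---so the approach is identical to what the paper sketches.

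One point worth flagging: the theorem asserts an \emph{explicit} construction, whereas your argument only invokes the GV bound to get existence. The actual content of \cite{porat2008} is a derandomization (via conditional expectations over a linear code) that turns the GV existence statement into a polynomial-time explicit code; your sketch does not touch this, so strictly speaking you have verified the parameters of a code on the GV bound but not its explicitness. Since the paper itself defers that to the citation, this is not a divergence from the paper, but it is a genuine gap relative to the theorem as stated.
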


Compared to the Kautz-Singleton construction, one can observe that the Porat and Rothscheld's construction is better in the regime where $d = O(\textnormal{poly}(\log n))$. However, the Kautz-Singleton construction is better when $d = \Theta(n^\alpha)$ for some constant $\alpha \in (0, 1/2)$. In this regime, the Kautz-Singleton construction meets the fundamental lower bound and is therefore order-optimal. 

\subsection{Noisy Test Outcomes}

We have so far discussed the setting in which the test outcomes are always correct, i.e., there is no noise in the measurement process. However, in many practical applications such as drug discovery and DNA library screening, testing errors are present \cite{KAINKARYAM, du2000combinatorial}. Naturally, the aforementioned definitions and techniques can be extended so that one can identify the defective items even with certain number of faulty test outcomes. The following definition extends the notion of disjunctiveness in the presence of error in the measurement process \cite{du2000combinatorial}.
\begin{definition}
A $t \times n$ binary matrix $M$ is called $(d, \nu)$-disjunct ($\nu$-error detecting $d$-disjunct) if $|\textnormal{supp}(M_i) \backslash \cup_{j \in S} \textnormal{supp}(M_j)  | > \nu$ for every set $S$ of columns with $|S| \leq d$ and every $i \in [n]\backslash S$.
\end{definition}

We note that $(d, 0)$-disjunct is simply $d$-disjunct and a $(d, \nu)$-disjunct matrix can detect up to $\nu$ errors and can correct up to $\lfloor \nu/2 \rfloor$ errors in the test measurements. The latter can be done by simply modifying the cover decoder to incorporate the noise as follows.
\begin{proposition}
Let the cover decoder scan through the columns of ${M}$ and eliminate all items belonging to at least 
$\lceil \nu/2 \rceil + 1$ negative tests and return the remaining items. The cover decoder correctly identifies all defective items without any error if $M$ is $(d, \nu)$-disjunct in the case when the test outcomes have up to $\lfloor \nu/2 \rfloor$ errors.
\end{proposition}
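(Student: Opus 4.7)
The plan is to verify the two requirements of the cover decoder separately: that no truly defective item is eliminated, and that every non-defective item is eliminated. Fix the (unknown) defective set $D$ with $|D|\le d$, let $Y=\bigvee_{j\in D} M_j$ denote the noiseless outcome vector, and let $\tilde Y$ denote the observed outcome, with $\tilde Y$ differing from $Y$ in at most $\lfloor \nu/2\rfloor$ coordinates.

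First I would handle defective items. If $j\in D$ and $i\in\supp(M_j)$, then test $i$ contains a defective item so $Y_i=1$. Thus any observed negative test containing $j$ must be one of the positions flipped by noise, of which there are at most $\lfloor\nu/2\rfloor$. Since $\lfloor\nu/2\rfloor<\lceil\nu/2\rceil+1$, item $j$ is present in strictly fewer than $\lceil\nu/2\rceil+1$ observed negative tests, so the decoder does not eliminate it.

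Next I would handle non-defective items, which is where the $(d,\nu)$-disjunctness hypothesis enters. For $j\in[n]\setminus D$, the definition gives
\[
\bigl|\supp(M_j)\setminus\textstyle\bigcup_{k\in D}\supp(M_k)\bigr|\ge \nu+1.
\]
Every index $i$ in this set-difference satisfies $M_{ij}=1$ but $Y_i=0$, i.e.\ it is a truly negative test containing $j$. At most $\lfloor\nu/2\rfloor$ of these can be flipped to positive by noise, so at least $\nu+1-\lfloor\nu/2\rfloor=\lceil\nu/2\rceil+1$ of them remain negative in $\tilde Y$. Hence $j$ is eliminated by the decoder.

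Combining the two cases shows that the returned set is exactly $D$. The only step that requires any care is the arithmetic identity $\nu+1-\lfloor\nu/2\rfloor=\lceil\nu/2\rceil+1$, which I would verify once in each parity of $\nu$; everything else is immediate from the definitions. There is no real obstacle — the proposition is essentially a bookkeeping check that the threshold $\lceil\nu/2\rceil+1$ is tight enough to kill every non-defective item while loose enough to spare every defective one under $\lfloor\nu/2\rfloor$ adversarial test errors.
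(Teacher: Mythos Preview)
Your proof is correct and complete. The paper states this proposition without proof, treating it as a standard observation following directly from the definition of $(d,\nu)$-disjunctness; your two-case argument (defective items survive because at most $\lfloor\nu/2\rfloor<\lceil\nu/2\rceil+1$ of their tests can flip negative, non-defective items are eliminated because at least $\nu+1-\lfloor\nu/2\rfloor=\lceil\nu/2\rceil+1$ of their guaranteed negative tests survive the noise) is exactly the intended verification.
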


We similarly define $t(d, \nu, n)$ to be the smallest $t$ needed for a binary $t \times n$ matrix $M$ to be $(d, \nu)$-disjunct.

\section{Main Results}
\label{sec:mainres}

In this section, we formally present our results for both the sparse codewords and sparse tests settings. We begin with the sparse codewords setting.

\subsection{Sparse Codewords}
\label{sec:sparse_codewords}

In the sparse codewords setting, we focus on a model where each item can participate in a limited number of tests. This is equivalent to restricting the codewords (columns of $M$) to have a limited number of ``1''s. 
For ease of presentation, we begin with the noiseless case and then extend our results to the more general noisy case in what follows. 

Recall, from the discussion in the preliminaries section, that if the codewords have a Hamming weight that is bounded by $d$, one cannot do better than the identity matrix; i.e., $t = n$. Hence, we are interested in the regime where $w_{\max} > d$.

Given that it is impossible to achieve $t < n$ when $w_{\max} \leq d$, it is natural to ask: what happens when $w_{\max} = d+1$? We recall from the preliminaries section that the Kautz-Singleton construction provides a constant column weight ($w = t_q$) group testing code that is $\left\lfloor\frac{t_q-1}{k_q -1} \right \rfloor$-{disjunct}. By choosing $k_q = 2$ and $t_q = d + 1$ we get a $d$-disjunct matrix
with $t = (d+1) \sqrt{n}$ tests and $w = d+1$ column weights when $q \geq t_q$ is satisfied. Therefore the natural question is how good this construction is in terms of the required number of tests for a $d$-disjunct matrix with $w_{\max} \leq d + 1$. The following theorem presents our first result answering this question.
\begin{theorem}\label{thm:sparsecodewords_special}
For all integers $d, n \geq 2$ such that $d + 1 \leq \sqrt{n}$ and $\sqrt{n}$ is a prime power,  the Kautz-Singleton construction provides a $t \times n$ matrix that is $d$-disjunct with constant column weight $w = d + 1$ and
\begin{align*}
t = (d + 1) \sqrt{n}.
\end{align*}

On the other hand, for all integers $d, n \geq 2$, a $t \times n$ matrix that is $d$-disjunct with maximum column weight
$w_{\max} \leq d+1$ must satisfy
\begin{align*}
t \geq \min \left\lbrace \sqrt{nd(d+1)} , n \right\rbrace.
\end{align*}
\end{theorem}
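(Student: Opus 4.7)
The plan for the upper bound is to instantiate the Kautz-Singleton construction (Section~\ref{sec:prior}) with the inner-code parameters $k_q=2$ and $t_q=d+1$, taking alphabet size $q=\sqrt{n}$. The two hypotheses ($\sqrt{n}$ is a prime power and $d+1\le\sqrt{n}$) exactly match Kautz-Singleton's requirements that $q$ be a prime power with $q\ge t_q$. With these choices, the construction produces $n=q^{k_q}=q^2$ codewords of length $t=q\,t_q=(d+1)\sqrt{n}$, with constant column weight $w=t_q=d+1$ and maximum correlation $\lambda_{\max}=k_q-1=1$; the disjunctness lemma from the preliminaries then certifies $\lfloor(w-1)/\lambda_{\max}\rfloor=d$-disjunctness.

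\textbf{Lower bound --- dichotomy.} For the lower bound I would fix any $d$-disjunct $t\times n$ matrix $M$ with $w_{\max}\le d+1$ and partition the rows of $M$ into ``private'' rows (weight exactly $1$, of which there are $t_1$) and ``shared'' rows (weight at least $2$, of which there are $t_2=t-t_1$). The core claim to establish is a dichotomy on columns: every column $j$ satisfies at least one of (i) $\supp(M_j)$ contains a private row, or (ii) $w_j=d+1$ and $|\supp(M_i)\cap\supp(M_j)|\le 1$ for every $i\ne j$. To prove this, suppose (i) fails, so for each $r\in\supp(M_j)$ there exists $c_r\ne j$ with $r\in\supp(M_{c_r})$. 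If $w_j\le d$, then $\{c_r\}_{r\in\supp(M_j)}$ contains at most $d$ distinct columns whose union covers $M_j$, violating $d$-disjunctness. If $w_j=d+1$ and some $i\ne j$ has $|\supp(M_i)\cap\supp(M_j)|\ge 2$, then $\{i\}\cup\{c_r:r\in\supp(M_j)\setminus\supp(M_i)\}$ consists of at most $1+(d-1)=d$ distinct columns covering $M_j$, a contradiction.

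\textbf{Counting and optimization.} Next I would let $P$ denote the set of type-(i) columns; since each private row lies in a unique column, $|P|\le t_1$. The columns outside $P$ satisfy (ii) and, having no private row, sit entirely in the $t_2$ shared rows while pairwise intersecting in at most one position. A standard pair-count (each such column uses $\binom{d+1}{2}$ distinct pairs of rows, no pair reused) yields
\begin{equation*}
n-|P|\;\le\;\binom{t_2}{2}\Big/\binom{d+1}{2}\;=\;\frac{t_2(t_2-1)}{d(d+1)}.
\end{equation*}
Combining with $|P|\le t_1$ and $t_1+t_2=t$ gives $n\le t_1+t_2^2/(d(d+1))$. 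This right-hand side is a convex function of $t_1$ on $[0,t]$ and is therefore maximized at an endpoint, so $n\le\max\{t,\,t^2/(d(d+1))\}$, which forces $t\ge n$ or $t\ge\sqrt{nd(d+1)}$, i.e., $t\ge\min\{\sqrt{nd(d+1)},n\}$.

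\textbf{Main obstacle.} The delicate step is establishing the dichotomy: the argument must rule out every intermediate configuration (columns of weight $<d+1$ without private rows, and weight-$(d+1)$ columns possessing some pairwise intersection $\ge 2$ but no private row) by explicitly exhibiting a $d$-cover in each case. Once this classification is in hand, the pair-counting inequality on the shared rows and the one-variable convex maximization that complete the bound are essentially routine.
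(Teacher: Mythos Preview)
Your proof is correct and follows essentially the same approach as the paper. The only cosmetic difference is parameterization: the paper partitions the \emph{columns} into $\mathcal{N}_1$ (those with a private row) and $\mathcal{N}_2$, sets $n_1=|\mathcal{N}_1|$ as the free variable, obtains $t\ge n_1+\sqrt{(n-n_1)d(d+1)}$, and minimizes this concave expression over $n_1\in[0,t]$; you instead partition the \emph{rows} into private and shared, take $t_1$ (the number of private rows) as the free variable, obtain $n\le t_1+(t-t_1)^2/(d(d+1))$, and maximize this convex expression over $t_1\in[0,t]$. The dichotomy, the pair-counting inequality, and the endpoint optimization are identical in substance.
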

\begin{proof}
We begin with the lower bound. Let ${M}$ be a $t \times n$ $d$-{disjunct} matrix with $w_{\max} \leq d+1$. We will separate the columns of $M$ into two groups $\mathcal{N}_1, \mathcal{N}_2 \subseteq [n]$ such that $\mathcal{N}_1 \cup \mathcal{N}_2 = [n]$ and $\mathcal{N}_1 \cap \mathcal{N}_2 = \emptyset$. We define a row $i \in [t]$ to be private for a column $j$, if $j$ is
the only column in the matrix having a one on row $i$. If a column ${M}_j$ has weight at most $d$, then it must have at least one private row, otherwise we can find at most $d$ columns such that their union will span ${M}_j$ which contradicts with $d$-disjunctiveness. Now consider all columns that have weight equal to $d+1$. It is possible that some of them also have private rows. Hence, we construct the first set $\mathcal{N}_1$ such that it includes the columns whose weight is less than or equal to $d$ and the ones that have weight equal to $d+1$ such that they have at least one private row. The second set $\mathcal{N}_2$ consists of the rest of the columns; i.e., the ones that have weight equal to $d+1$ and do not have any private row. Defining $w_j$ to be weight of the column $j$ for $1 \leq j \leq n$, more formally we define
\begin{align*}
&\mathcal{N}_1 \coloneqq \left\lbrace j \in [n] \ | \ w_j \leq d \ \textnormal{or} \ w_j = d+1 \ \textnormal{and ${M}_j$ has at least one private row}
\right\rbrace, \\
&\mathcal{N}_2 \coloneqq \left\lbrace j \in [n] \ | \ w_j = d+1 \ \textnormal{and ${M}_j$ has no private row}
\right\rbrace.
\end{align*}
Note that by construction, $\mathcal{N}_1 \cup \mathcal{N}_2 = [n]$ and $\mathcal{N}_1 \cap \mathcal{N}_2 = \emptyset$, hence $n = |\mathcal{N}_1| + |\mathcal{N}_2|$. In the following, we will bound the size of both sets $\mathcal{N}_1$ and $\mathcal{N}_2$.

We note that each column in the set $\mathcal{N}_1$ has at least one private row and by definition of the private row it cannot be shared by two distinct columns. Since there could be at most $t$ private rows, we have $|\mathcal{N}_1| \leq t$.

We now consider the set $\mathcal{N}_2$. We generalize the definition of the private row to the private set as follows. A private set for a column is defined as a set of position of ones such that no other column can cover these positions by itself; i.e., no other column has all ones in these positions. We claim that all size-2 sets of positions of ones of a column in set $\mathcal{N}_2$ must be private; i.e., all pairs of positions of ones must be private for a column in set $\mathcal{N}_2$. We prove this by contradiction. Assume there exists a column in the set $\mathcal{N}_2$ such that it has at least one pair of positions of ones which is not private. This means that there exists another column which can cover these positions. Note that any column in the set $\mathcal{N}_2$ has weight $d+1$ and has no private row, therefore, there are $d-1$ positions of ones except this pair and we can find at most $d-1$ columns that can cover all these positions. It follows that we can find at most $d$ columns that can cover all $d+1$ positions of ones of this column which contradicts with the $d$-disjunctiveness. Note that there are $\binom{d+1}{2}$ number of pairs of position of ones for any column in $\mathcal{N}_2$ and by definition of private set it cannot be shared by two distinct columns. We further note that each column in the set $\mathcal{N}_1$ will have a private row and it must be the case that the columns in the set $\mathcal{N}_2$ must have a zero in these rows, therefore, there could be at most $\binom{t - |\mathcal{N}_1|}{2}$ number of private pairs. Hence, we have
\begin{align*}
|\mathcal{N}_2| \binom{d+1}{2} \leq \binom{t - |\mathcal{N}_1|}{2}.
\end{align*}
Therefore,
\begin{align*}
|\mathcal{N}_2| d (d+1) \leq (t - |\mathcal{N}_1|)(t - |\mathcal{N}_1| - 1) \leq (t - |\mathcal{N}_1|)^2.
\end{align*}
Defining $n_1 \triangleq |\mathcal{N}_1|$, this gives us
\begin{align}
t \geq n_1 + \sqrt{(n-n_1)d(d+1)}
\label{concave}
\end{align}
Note that $0 \leq n_1 \leq t \leq n$. One can take the second derivative of the right-hand side of \eqref{concave} and observe that it is negative for $0 \leq n_1 \leq t \leq n$ which means it is a concave function of $n_1$ and it will be minimum at either $n_1 = 0$ or $n_1 = t$. Therefore,
\begin{align*}
t \geq \min  \left\lbrace \sqrt{nd(d+1)} , t +  \sqrt{(n-t)d(d+1)} \right\rbrace.
\end{align*}
Noting that $t \geq t +  \sqrt{(n-t)d(d+1)}$ only when $t = n$, one can observe that
\begin{align*}
t \geq \min \left\lbrace \sqrt{nd(d+1)} , n \right\rbrace.
\end{align*}


For the achievability, we use the Kautz-Singleton construction. We choose $w = t_q = d+1$ and $k_q = 2$. Since $n = q^{k_q}$, we obtain $q = \sqrt{n}$ and therefore $t = (d+1)\sqrt{n}$. Note that $\lambda_{\max} 
= k_q - 1 = 1$, hence $w_{\min} \geq d \lambda_{\max} + 1$ is satisfied which is sufficient to achieve a $d$-disjunct matrix. In order to satisfy the requirement $q \geq t_q$ where $q$ is any prime power, we must ensure that $d+1 \leq \sqrt{n}$ and $q = \sqrt{n}$ must be a prime power. This completes the proof for the achievability.
\end{proof}

A few comments are in order. First, Theorem \ref{thm:sparsecodewords_special} shows that by increasing $w_{\max}$ from $d$ to $d +1$, we suddenly get $t = \Theta(d \sqrt{n})$ instead of $t = n$. Second, the achievability result in Theorem \ref{thm:sparsecodewords_special} is obtained by changing the field size from $q = O(d \log_d n)$ to $q = \sqrt{n}$ in the Kautz-Singleton construction. The Kautz-Singleton construction is strictly suboptimal in the classical case when $d = O(\textnormal{poly}(\log n))$. It is interesting that a simple modification of this well known construction makes it optimal in this case (even up to an almost matching constant).

We next investigate the more general case where the codeword weights are bounded by $w_{\max} \leq l d + 1$ for some integer $l > 1$. We note that by choosing $k_q = l + 1$ and $t_q = l d + 1$ we get a $d$-disjunct matrix with $t = (ld+1) \sqrt[l+1]{n}$ tests and $w = l d + 1$ column weights using the Kautz-Singleton construction when $q \geq t_q$ is satisfied. In this case we can show that this construction is nearly optimal.
\begin{theorem} \label{thm:sparsecodewords}
For all integers $d, n, l \geq 2$ such that $l d  + 1 \leq \sqrt[l+1]{n}$ and $\sqrt[l+1]{n}$ is a prime power, the Kautz-Singleton construction provides a $t \times n$ matrix that is $d$-disjunct with constant column weights $w = l d + 1$ and
\begin{align*}
t = (l d + 1) \sqrt[l+1]{n},
\end{align*}

On the other hand, for all integers $d, n, l \geq 2$, a $t \times n$ matrix that is $d$-disjunct with maximum column weight
$w_{\max} \leq l d + 1$ must satisfy
\begin{align*}
t \geq \left(  \dfrac{(l-1)^{l+1}(d-1)^{l+1}}{ 2 e^l (l-1) (d-1)^{l-1} + 1}  \right)^{1/(l+1)} \sqrt[l+1]{n}.
\end{align*}
\end{theorem}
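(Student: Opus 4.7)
The plan is to generalize the two-class partition argument of Theorem~\ref{thm:sparsecodewords_special} to an $(l+1)$-class partition, based on the size of the smallest \emph{private subset} of each column. Call a set $S \subseteq \supp(M_j)$ a private $s$-subset of $M_j$ if $|S|=s$ and $S \not\subseteq \supp(M_i)$ for every $i \neq j$. Note that if $S$ is private for $M_j$ and $S \subseteq \supp(M_{j'})$ with $j' \neq j$, then taking $i = j'$ violates the condition, so private subsets of distinct columns are themselves distinct.

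I would first observe that every column $M_j$ has a private subset of size at most $l+1$. If every $s$-subset of $\supp(M_j)$ with $s \leq l+1$ were non-private, a greedy covering argument (at each step pick a column containing an $(l+1)$-subset of the uncovered positions, then handle a possible leftover of $<l+1$ positions with one more column) would produce a cover of $\supp(M_j)$ by at most $\lceil w_j/(l+1) \rceil \leq \lceil (ld+1)/(l+1)\rceil \leq d$ other columns, contradicting $d$-disjunctness. I would then set $s(j) := \min\{s : M_j \text{ has a private $s$-subset}\}\in\{1,\dots,l+1\}$ and partition $[n] = \bigsqcup_{s=1}^{l+1}\mathcal{N}_s$ with $\mathcal{N}_s := \{j : s(j) = s\}$.

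Next I would bound $|\mathcal{N}_s|$. For $M_j \in \mathcal{N}_s$, every $(s-1)$-subset of $\supp(M_j)$ is non-private, so the same greedy argument forces $w_j \geq (s-1)d+1$. The crux, generalizing the ``all $\binom{d+1}{2}$ pairs are private'' step of Theorem~\ref{thm:sparsecodewords_special}, is to establish a lower bound $P_s$ on the number of private $s$-subsets possessed by every column in $\mathcal{N}_s$. The idea is that if $M_j\in\mathcal{N}_s$ had too few private $s$-subsets, one could construct a cover of $\supp(M_j)$ by only $d$ other columns by combining columns covering non-private $s$-subsets (gaining $s$ positions each) with greedy coverage of the remainder via non-private $(s-1)$-subsets, contradicting disjunctness. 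In the boundary case $w_j=(s-1)d+1$ this argument forces \emph{every} $s$-subset of $\supp(M_j)$ to be private, yielding the clean bound $P_s = \binom{(s-1)d+1}{s}$ that directly generalizes the pair-counting in Theorem~\ref{thm:sparsecodewords_special}.

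Finally, since distinct columns have distinct private subsets, $|\mathcal{N}_s|\cdot P_s \leq \binom{t}{s}$, giving $n = \sum_{s=1}^{l+1}|\mathcal{N}_s| \leq \sum_{s=1}^{l+1}\binom{t}{s}/P_s$. Applying $\binom{t}{s}\leq t^s/s!$ and the estimate $\binom{(s-1)d+1}{s} \geq ((s-1)(d-1))^s/s!$ (with the factor $e$ entering through $(s!)^{1/s}\geq s/e$), and then bounding the sum by its dominant contributions (the $s=l+1$ term combined with the $s=1$ piece $|\mathcal{N}_1|\leq t$), should deliver an inequality of exactly the stated form $n \leq t^{l+1}\cdot\bigl(2e^l(l-1)(d-1)^{l-1}+1\bigr)/((l-1)^{l+1}(d-1)^{l+1})$. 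The principal obstacle is the uniform lower bound $P_s$ for columns of non-minimum weight $w_j\in((s-1)d+1,\,ld+1]$: such columns may have some of their $s$-subsets non-private, so one must carefully amortize over the weight and argue that the private-subset count remains large enough; this subtle combinatorial accounting is where the specific constants $(l-1)^{l+1}(d-1)^{l+1}$ and $2e^l(l-1)(d-1)^{l-1}+1$ are forced.
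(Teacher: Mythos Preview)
Your overall strategy---partition the columns by the level of the smallest private subset and bound each class by counting private subsets inside $\binom{t}{s}$---is exactly the paper's route, and your greedy covering argument for the weight lower bound $w_j\geq (s-1)d+1$ is correct. The partition you write down (by $s(j)=\min\{s:M_j\text{ has a private }s\text{-subset}\}$) differs cosmetically from the paper's (which partitions primarily by weight range with a private-set tiebreaker), but they are equivalent in effect: in both schemes every column in the intermediate classes $\mathcal{N}_2,\dots,\mathcal{N}_l$ possesses at least one private $i$-subset of the appropriate size, and the final class $\mathcal{N}_{l+1}$ consists of columns with no private $l$-subset (which forces $w_j=ld+1$ and makes \emph{every} $(l+1)$-subset private).

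Where your plan diverges from the paper is precisely at the ``principal obstacle'' you flag. You aim for the strong bound $P_s\geq\binom{(s-1)d+1}{s}$ uniformly on $\mathcal{N}_s$, but for columns of weight strictly larger than $(s-1)d+1$ this need not hold, and no amortization is going to rescue it: a column with $s(j)=s$ and $w_j$ close to $ld+1$ can have as few as one private $s$-subset without violating $d$-disjunctness. The paper does not attempt this. Instead it uses only the trivial $P_i\geq 1$ for $i=1,\dots,l$, obtaining
\[
\sum_{i=1}^{l}|\mathcal{N}_i|\;\leq\;\sum_{i=1}^{l}\binom{t}{i}\;\leq\;\Bigl(\frac{et}{l}\Bigr)^{l},
\]
and reserves the full ``all $(l+1)$-subsets are private'' argument for $\mathcal{N}_{l+1}$ alone, yielding $|\mathcal{N}_{l+1}|\leq\binom{t}{l+1}/\binom{(l-1)d+2}{l+1}\leq t^{l+1}/((l-1)(d-1))^{l+1}$. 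The specific constants you were trying to force out of $P_s$ actually come from a completely different step you did not anticipate: the paper invokes the a~priori lower bound $t\geq\binom{d+2}{2}\geq(d-1)^2/2$ of Proposition~\ref{prop:larged} to promote the $O(t^l)$ term to a $t^{l+1}$ term via
\[
\Bigl(\frac{et}{l}\Bigr)^{l}\;\leq\;\frac{e^l t^{l}}{(l-1)^l}\cdot\frac{2t}{(d-1)^2}
\;=\;\frac{2e^l\,t^{l+1}}{(l-1)^l(d-1)^2}.
\]
Combining this with the $\mathcal{N}_{l+1}$ term gives exactly the stated inequality; the $2e^l(l-1)(d-1)^{l-1}+1$ in the denominator is just the common-denominator sum of these two contributions. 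So the fix to your plan is: drop the pursuit of $P_s>1$ for $s\leq l$, and instead use Proposition~\ref{prop:larged} as a multiplier.

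Finally, your proposal addresses only the lower bound; the achievability half of the theorem is the straightforward Kautz--Singleton instantiation with $k_q=l+1$, $t_q=ld+1$, $q=n^{1/(l+1)}$, which you should state explicitly.
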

\begin{proof}
We begin with the lower bound. Let ${M}$ be a $t \times n$ $d$-{disjunct} matrix with $w_{\max} \leq ld+1$. We similarly separate the columns of $M$ into $l+1$ groups and construct the sets $\mathcal{N}_i$ for $i = 1, \ldots, l+1$ such that $\mathcal{N}_1 \cup \ldots \cup \mathcal{N}_{l+1} = [n]$ and $\mathcal{N}_i \cap  \mathcal{N}_j = \emptyset$ for any $i, j \in [l+1]$ such that $i \neq j$. We construct the sets $\mathcal{N}_1, \ldots, \mathcal{N}_{l+1}$ as follows.
We keep the first set $\mathcal{N}_1$ as the columns whose weight is less than or equal to $d$ and the ones that have weight equal to $d+1$ such that they have at least one private row. For $i = 2, \ldots, l$, the set $\mathcal{N}_i$ consists of the columns that satisfies one of the following two conditions: Either its weight is between $(i-2)d + 2 $ and $(i-1)d + 1$ and it has no private set of size $i-1$ or between $(i-1)d + 2$ and $i d + 1$ and it has at least one private set of size $i$. Finally, the last set $\mathcal{N}_{l+1}$ consists of the columns whose weight is between $(l-1)d + 2 $ and $l d + 1$ and they have no private set of size $l$. More formally,
\begin{align*}
&\mathcal{N}_1 \coloneqq \left\lbrace j \in [n] \ | \ w_j \leq d \ \textnormal{or} \ w_j = d+1 \ \textnormal{and ${M}_j$ has at least one private row}
\right\rbrace, \\
&\mathcal{N}_i \coloneqq \lbrace j \in [n] \ | \  (i-2)d + 2\leq w_j \leq (i-1)d + 1 \ \textnormal{and ${M}_j$ has no private set of size $i-1$} \
\\ &  \ \ \ \ \ \ \ \ \ \ \ \ \ \ \
\textnormal{or}
\\ &  \ \ \ \ \ \ \ \ \ \ \ \ \ \ \
(i-1)d + 2 \leq w_j \leq i d + 1 \ \textnormal{and ${M}_j$ has at least one private set of size $i$} \rbrace,
\\ & \textnormal{for $i = 2, \ldots, l$},
\\ & \mathcal{N}_{l+1} \coloneqq \lbrace j \in [n] \ | (l-1)d + 2\leq w_j \leq l d + 1 \ \textnormal{and ${M}_j$ has no private set of size $l$} \rbrace.
\end{align*}
Note that by construction, $\mathcal{N}_1 \cup \ldots \cup \mathcal{N}_{l+1} = [n]$ and $\mathcal{N}_i \cap  \mathcal{N}_j = \emptyset$ for any $i, j \in [l+1]$ such that $i \neq j$, hence $n = |\mathcal{N}_1| + \ldots + |\mathcal{N}_{l+1}|$. In the following, we will bound the size of these sets.

Recalling the discussion in the previous section, we have $|\mathcal{N}_1| \leq t$. Consider the sets $\mathcal{N}_i$ for $i = 2, \ldots, l$. For any column $j \in \mathcal{N}_i$, if we have $(i-1)d + 2 \leq w_j \leq i d + 1$, then by construction ${M}_j$ has at least one private set of size $i$. For the case
$(i-2)d + 2\leq w_j \leq (i-1)d + 1$, we claim that all the sets of positions of ones of size $i$ must be private for the column ${M}_j$. We similarly show this by contradiction. Assume there exists a set of positions of ones of size $i$ such that it is not private for the column ${M}_j$. Then we can find a column that can cover these positions. Since by construction of set ${N}_i$, the column ${M}_j$ has no private set of size $i-1$, one can find at most $ ((i-1)d + 1 - i)/(i-1) = d - 1 $ columns that will cover the rest of the positions of ones. Hence we have at most $d$ columns covering the column ${M}_j$ which contradicts the $d$-disjunctiveness. Therefore, we obtain that all the columns in the set ${N}_i$ must have at least one private set of size $i$. Since the private sets cannot be shared among the columns and we have at most $\binom{t}{i}$ private sets of size $i$, it yields $|\mathcal{N}_i| \leq \binom{t}{i}$.  For the last set $\mathcal{N}_{l+1}$, similar arguments apply and for each column, it should be the case that all the  set of positions of ones of size $l+1$ must be private. Since $w_j \geq (l-1)d + 2$ for $j \in \mathcal{N}_{l+1}$, we have $|\mathcal{N}_{l+1}| \binom{(l-1)d + 2}{l+1} \leq \binom{t}{l+1}$. Therefore,
\begin{align*}
n & = |\mathcal{N}_1| + \ldots + |\mathcal{N}_{l+1}| \\ & \leq \sum \limits_{i=1}^{l} \binom{t}{i}  + \dfrac{\binom{t}{l+1}}{\binom{(l-1)d + 2}{l+1}}
\\ & \overset{(i)}{\leq}
\left( \dfrac{e t}{l} \right)^l + \dfrac{t \ldots (t - l)}{ ((l-1)d + 2) \ldots ((l-1)(d-1) + 1)}
\\ & \overset{(ii)}{\leq}
\dfrac{e^l t^l}{l^l}  + \dfrac{t^{l+1}}{ ((l-1)(d-1))^{l+1} }
\\ & \overset{(iii)}{\leq}
\dfrac{e^l t^{l}}{(l-1)^l} \dfrac{t}{(d-1)^2/2}  + \dfrac{t^{l+1}}{ ((l-1)(d-1))^{l+1} }
\\ & =
t^{l+1} \left(  \dfrac{2 e^l}{(l-1)^l (d-1)^2} + \dfrac{1}{(l-1)^{l+1}(d-1)^{l+1}}  \right)
\\ & =
t^{l+1} \left(  \dfrac{ 2 e^l (l-1) (d-1)^{l-1} + 1}{(l-1)^{l+1}(d-1)^{l+1}}  \right)
\end{align*}
where $(i)$ is due to the inequality $\sum \limits_{i=0}^{l} \binom{t}{i} \leq \left( \frac{e t}{l} \right)^l$
for $t \geq l \geq 1$, $(ii)$ is bounding all the terms in the numerator by $t$ and denominator by $(l-1)(d-1)$ and in $(iii)$  we use \eqref{simpleLB} and $\binom{d+2}{2} \geq \frac{(d-1)^2}{2}$. This completes the lower bound.

For the achievability, we use the Kautz-Singleton construction. We choose $w = t_q = ld+1$ and $k_q = l+1$. Since $n = q^{k_q}$, we obtain $q = \sqrt[l+1]{n}$ and therefore $t = (ld+1)\sqrt[l+1]{n}$. Note that $\lambda_{\max} 
= k_q - 1 = l$, hence $w_{\min} \geq d \lambda_{\max} + 1$ is satisfied which is sufficient to achieve a $d$-disjunct matrix. In order to satisfy the requirement $q \geq t_q$ where $q$ is any prime power, we must ensure that $ld+1 \leq \sqrt[l+1]{n}$ and $q = \sqrt[l+1]{n}$ must be a prime power. This completes the proof for the achievability.
\end{proof}

Note that as we increase the weights as a multiple of $d$ (i.e., $w_{\max} = ld +1$), the minimum number of required tests decreases exponentially in $l$. As we see from Theorem \ref{thm:sparsecodewords}, for a fixed $l$ the lower bound we get is $\Theta\left(d^{\frac{2}{l+1}} \sqrt[l+1]{n}\right)$ whereas the upper bound is $\Theta(d \sqrt[l+1]{n})$. While we have a matching lower and upper bounds in terms of the scaling with respect to $n$, there is an increasing gap of $d^{\frac{l-1}{l+1}}$ between them, which approaches $d$ for large $l$.

We continue our discussion with the noisy case. As we have seen that it is impossible to achieve $t < n$ when $w_{\max} \leq d$ in the noiseless case, a similar result can be observed in the noisy case as well. 
Our next result extends this to the noisy setting with an arbitrary noise parameter $v$.
\begin{proposition}
If $w_{\max} \leq d + \nu$, then $t(d, \nu, n) = (\nu+1) n$.
\label{prp:sc_noisy_trivial} 
\end{proposition}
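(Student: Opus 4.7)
The plan is to extend the private-row counting argument from the proof of Theorem \ref{thm:sparsecodewords_special}, replacing the single ``uncoverable'' row demanded by ordinary disjunctness with the $\nu+1$ such rows demanded by $(d,\nu)$-disjunctness. The upper bound $t(d,\nu,n) \leq (\nu+1)n$ is witnessed immediately by the $(\nu+1)$-fold vertical stacking of $I_n$: each column has weight exactly $\nu+1 \leq d+\nu$, and since distinct columns have disjoint supports, for any $S$ with $|S|\leq d$ and $i \notin S$ we get $|\supp(M_i)\setminus \bigcup_{j\in S}\supp(M_j)| = \nu+1 > \nu$, certifying $(d,\nu)$-disjunctness.

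For the matching lower bound, I would call a row \emph{private} to column $j$ if only $M_j$ has a $1$ in it, write $P_j$ for this set, and observe that private rows of different columns are disjoint, so $t \geq \sum_j |P_j|$. It therefore suffices to show $|P_j| \geq \nu+1$ for every column $j$. This is the ``error-tolerant'' analogue of the $|P_j|\geq 1$ observation used in the noiseless proof.

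To establish this bound, fix $j$ and suppose for contradiction that $|P_j| \leq \nu$. I would explicitly construct a set $S$ of at most $d$ columns violating $(d,\nu)$-disjunctness at $j$. Set $r = \min(d,\, w_j - |P_j|)$, pick any $r$ non-private rows of $M_j$, and for each of them select one column (other than $M_j$) containing a $1$ in that row; let $S$ be the resulting collection. Then $|S|\leq r \leq d$, $j \notin S$, and the $r$ chosen rows are covered by $S$, whence
\begin{equation*}
\bigl|\supp(M_j)\setminus {\textstyle\bigcup_{k\in S}}\supp(M_k)\bigr| \leq |P_j| + (w_j - |P_j|) - r = w_j - r.
\end{equation*}
If $r = w_j - |P_j|$, the right-hand side is $\leq |P_j| \leq \nu$. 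If $r = d$, it is $\leq w_j - d \leq \nu$, using the hypothesis $w_j \leq w_{\max}\leq d+\nu$. Both cases contradict $(d,\nu)$-disjunctness, which requires strictly more than $\nu$ uncovered rows; hence $|P_j|\geq \nu+1$ for all $j$, and summing gives $t \geq (\nu+1)n$.

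The only mildly delicate point is the case split on which of the two terms realizes the minimum in $r$; once that is organized, the bound $w_j\leq d+\nu$ plugs in exactly as needed. There is no substantive obstacle beyond clean bookkeeping, and setting $\nu=0$ recovers Proposition \ref{prop:smallw} as a sanity check.
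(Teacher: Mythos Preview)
Your proof is correct and follows essentially the same private-row counting argument as the paper's. The only cosmetic difference is that where you choose $r=\min(d,\,w_j-|P_j|)$ non-private rows and split into two cases, the paper directly picks $w_j-\nu$ non-private rows (which is simultaneously $\leq d$ by the hypothesis $w_j\leq d+\nu$ and $\leq w_j-|P_j|$ by the assumption $|P_j|\leq\nu$), avoiding the case split; but the underlying idea and the resulting bound are identical.
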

The proof of the above theorem can be found in Appendix \ref{proof:sc_noisy_trivial}. Proposition \ref{prp:sc_noisy_trivial} similarly shows that one cannot do better than individual testing corresponding to the more general noisy setting if the codeword weights are bounded by $d + \nu$. 

We note that 
it is sufficient to have $w_{\min} \geq d \lambda_{\max} + \nu + 1$ to obtain a $(d, \nu)$-disjunct matrix. 
We can employ the Kautz-Singleton construction and fix $k_q = 2$ and $t_q = d+\nu+1$ to get a $(d, \nu)$-disjunct matrix with $t = (d + \nu + 1)\sqrt{n}$ tests and $w = d + \nu + 1$ column weights when $q \geq t_q$ is satisfied. The following theorem shows that this is order-optimal when $w_{\max} \leq d+\nu+1$.
\begin{theorem}\label{thm:sc_noisy_first}
For all integers $d, n \geq 2$ and $\nu \geq 0$, a $t \times n$ matrix that is $(d, \nu)$-disjunct with maximum column weight
$w_{\max} \leq d+\nu+1$ must satisfy
\begin{align*}
t \geq \min \{(\nu+1) n, \sqrt{(d+\nu)(d+\nu+1)n}\}.
\end{align*}
\end{theorem}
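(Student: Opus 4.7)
The plan is to generalize the proof of Theorem~\ref{thm:sparsecodewords_special}'s lower bound to the noisy setting. Partition the columns of $M$ into $\mathcal{N}_1$, those with at least $\nu+1$ private rows (a private row of $M_j$ being a row whose unique $1$-entry belongs to column $j$), and $\mathcal{N}_2$, those with at most $\nu$ private rows. Since each private row is owned by a single column, the first group immediately satisfies $(\nu+1)|\mathcal{N}_1|\le t$, which is what ultimately produces the $(\nu+1)n$ term in the bound (whereas in the noiseless analog of Theorem~\ref{thm:sparsecodewords_special} one only had $|\mathcal{N}_1|\le t$).

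The technical core is a structural description of $\mathcal{N}_2$. First, I would argue that if $M_j$ has $p_j\le\nu$ private rows then necessarily $w_j\ge d+p_j+1$: otherwise the $w_j-p_j\le d$ non-private positions of $M_j$ could each be matched with a covering column, and the resulting at most $d$ distinct columns would leave only $p_j\le\nu$ positions of $M_j$ uncovered, violating $(d,\nu)$-disjunctness. Next, I would show that no other column $M_k$ can have intersection $s\ge 2$ with $\supp(M_j)$: using $w_j-p_j\ge d+1$, one can select $M_k$ together with $d-1$ distinct other columns that together cover $d+1-s$ additional non-private positions of $\supp(M_j)\setminus\supp(M_k)$ (padding with arbitrary extra columns if the greedy selection uses fewer than $d-1$ distinct ones), producing $d$ columns whose combined coverage of $M_j$ is at least $s+(d+1-s)=d+1$, which exceeds the permitted $w_j-\nu-1\le d$. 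Combining these two observations, a column in $\mathcal{N}_2$ with $w_j\le d+\nu$ would have $d$ columns covering at most $d$ of its positions (since every intersection is $\le 1$), leaving only $w_j-d\le\nu$ uncovered --- a contradiction. Hence every $M_j\in\mathcal{N}_2$ has $w_j=d+\nu+1$, and the same intersection bound means that every pair of positions in $\supp(M_j)$ is a ``private pair'', i.e., no single other column contains both.

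Private pairs cannot be shared between columns (two columns sharing a private pair would each cover the other's pair), and they lie in rows that are not private to $\mathcal{N}_1$ (whose private rows contain zeros in every $\mathcal{N}_2$ column). Therefore
\begin{align*}
|\mathcal{N}_2|\binom{d+\nu+1}{2}\le\binom{t-(\nu+1)|\mathcal{N}_1|}{2},
\end{align*}
and combining with $(\nu+1)|\mathcal{N}_1|\le t$ and $n=|\mathcal{N}_1|+|\mathcal{N}_2|$ gives, writing $n_1:=|\mathcal{N}_1|$,
\begin{align*}
n \le n_1+\frac{(t-(\nu+1)n_1)^2}{(d+\nu)(d+\nu+1)},\qquad n_1\in[0,t/(\nu+1)].
\end{align*}
The right-hand side is convex in $n_1$, so its maximum over the feasible range occurs at an endpoint: $n_1=0$ yields $t\ge\sqrt{(d+\nu)(d+\nu+1)n}$, while $n_1=t/(\nu+1)$ yields $t\ge(\nu+1)n$. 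The theorem's minimum follows. The main obstacle is the overlap argument, where one must ensure the $d-1$ auxiliary columns can be chosen distinctly; this requires $n\ge d+1$, and the small-$n$ regime ($n\le d+1$) is easily dispatched separately since then every column must have at least $\nu+1$ private rows, giving $t\ge(\nu+1)n$ directly.
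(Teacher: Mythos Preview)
Your proof is correct and reaches the same bound, but the execution differs from the paper's in a way worth noting. The paper refines $\mathcal{N}_2$ into classes $\mathcal{N}_{2,k}$ ($0\le k\le\nu$) according to the exact number of private rows, shows that every size-$2$ subset of the $d+\nu+1-k$ \emph{non-private} positions of a column in $\mathcal{N}_{2,k}$ is private, and obtains $\sum_k n_{2,k}\binom{d+\nu+1-k}{2}\le\binom{t-\alpha}{2}$ with $\alpha=(\nu+1)n_1+\sum_k kn_{2,k}$; the conclusion then requires a concave minimization over the $(\nu+1)$-dimensional simplex in the $n_{2,k}$. Your route instead proves the stronger global lemma that \emph{any} other column meets $\supp(M_j)$ in at most one position whenever $j\in\mathcal{N}_2$, from which both $w_j=d+\nu+1$ and the privacy of \emph{all} $\binom{d+\nu+1}{2}$ pairs in $\supp(M_j)$ follow at once; this collapses the endgame to a one-variable convex optimization over $n_1\in[0,t/(\nu+1)]$. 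Your argument is cleaner for this particular theorem, while the paper's weight-stratified decomposition is the one that extends directly to the $w_{\max}\le ld+\nu+1$ case of Theorem~\ref{thm:sc_noisy_general}.

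Two small points. First, the padding concern is unnecessary: since $(d,\nu)$-disjunctness quantifies over all $S$ with $|S|\le d$, exhibiting any set of at most $d$ columns suffices, and no separate treatment of $n\le d+1$ is needed. Second, your step-5 phrasing (``$d$ columns covering at most $d$ of its positions, leaving only $w_j-d\le\nu$ uncovered'') should be sharpened: covering \emph{at most} $d$ positions leaves \emph{at least} $w_j-d$ uncovered, which is the wrong direction. What you need is to \emph{exhibit} $d$ columns covering exactly $d$ positions, and this is possible because there are $w_j-p_j\ge d+1$ non-private positions and, by your intersection-$\le 1$ lemma, their covering columns are pairwise distinct.
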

The proof of the above theorem can be found in Appendix \ref{proof:sc_noisy_first}. It is interesting to observe that by increasing $w_{\max}$ from $d + \nu$ to $d + \nu + 1$, we are able to reduce to $t = \Theta((d+\nu) \sqrt{n})$ from $t = (\nu + 1)n$. Going further, we can generalize this to the case where the codeword weights are bounded by $w_{\max} \leq l d + \nu + 1$ for some integer $l > 1$. Fixing $k_q = l+1$ and $t_q = ld+\nu+1$ in the Kautz-Singleton construction provides us with a $(d, \nu)$-disjunct matrix that has 
$t = (ld+\nu+1) \sqrt[l+1]{n}$ tests and $w = l d + \nu + 1$ column weights. The next theorem shows that 
this construction is nearly optimal.
\begin{theorem}\label{thm:sc_noisy_general}
For all integers $d, n, l \geq 2$ and $\nu \geq 0$, a $t \times n$ matrix that is $(d, \nu)$-disjunct with maximum column weight
$w_{\max} \leq ld+\nu+1$ must satisfy
\begin{align*}
t \geq \left( \dfrac{2 e^l}{(d+v)^2 (l-1)^l} + \dfrac{1}{((l-1)(d-1) + \nu)^{l+1}} \right)^{-1/(l+1)} \sqrt[l+1]{n}.
\end{align*}
\end{theorem}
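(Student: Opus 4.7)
The proof mirrors that of Theorem \ref{thm:sparsecodewords} with every weight threshold shifted up by $\nu$ to accommodate the $\nu$-error slack in the $(d,\nu)$-disjunct condition. I would partition $[n]$ into $\mathcal{N}_1,\dots,\mathcal{N}_{l+1}$ using the same notion of a \emph{private set of size $s$} for a column $j$ (a size-$s$ subset of $\supp(M_j)$ not contained in $\supp(M_k)$ for any $k\neq j$). The set $\mathcal{N}_1$ collects columns with $w_j\leq d+\nu+1$ having at least one private row; for $2\leq i\leq l$, $\mathcal{N}_i$ collects columns with $w_j\in[(i-2)d+\nu+2,(i-1)d+\nu+1]$ having no private set of size $i-1$, together with those having $w_j\in[(i-1)d+\nu+2,id+\nu+1]$ and at least one private set of size $i$; and $\mathcal{N}_{l+1}$ collects columns of weight $w_j\in[(l-1)d+\nu+2,ld+\nu+1]$ with no private set of size $l$. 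Since $w_{\max}\leq ld+\nu+1$, this partitions $[n]$.

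The core analytical step is to show that every column in $\mathcal{N}_i$ (for $2\leq i\leq l$) admits a private set of size $i$, and every column in $\mathcal{N}_{l+1}$ has \emph{all} of its $\binom{w_j}{l+1}$ size-$(l+1)$ subsets private. For a column $M_j$ in the lower sub-range of $\mathcal{N}_i$, where $w_j\leq (i-1)d+\nu+1$ and no $(i-1)$-subset is private, I would argue by contradiction: if some size-$i$ subset of $\supp(M_j)$ were not private, one other column could cover these $i$ positions, and the remaining $w_j-i$ positions could be covered in $(i-1)$-chunks by $d-1$ further columns, so $d$ columns would together cover $i+(d-1)(i-1)=(i-1)d+1$ positions of $\supp(M_j)$, leaving at most $w_j-(i-1)d-1\leq\nu$ uncovered, violating $(d,\nu)$-disjunctness. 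The analogous argument for $\mathcal{N}_{l+1}$ (starting from a non-private $(l+1)$-subset and covering with $l$-chunks) yields $(l+1)+(d-1)l=ld+1$ positions covered and at most $\nu$ uncovered under $w_j\leq ld+\nu+1$. The $\nu$-shift in the weight ranges is precisely calibrated so these arithmetic identities hold.

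Since a private size-$s$ set cannot be contained in the support of any other column, it cannot be shared between distinct columns, giving $|\mathcal{N}_1|\leq t$, $|\mathcal{N}_i|\leq\binom{t}{i}$ for $2\leq i\leq l$, and $|\mathcal{N}_{l+1}|\binom{(l-1)d+\nu+2}{l+1}\leq\binom{t}{l+1}$. Summing $n=\sum_i|\mathcal{N}_i|$ and applying $\sum_{i=1}^{l}\binom{t}{i}\leq(et/l)^l$ together with $\binom{(l-1)d+\nu+2}{l+1}\geq((l-1)(d-1)+\nu)^{l+1}/(l+1)!$, I arrive at
\[
n\;\leq\;\frac{e^l t^l}{(l-1)^l}\;+\;\frac{t^{l+1}}{((l-1)(d-1)+\nu)^{l+1}}.
\]
To promote the leading $t^l$ into $t^{l+1}$, I would invoke the noisy extension of Proposition \ref{prop:larged}, namely $t\geq\binom{d+\nu+2}{2}\geq(d+\nu)^2/2$ for any $(d,\nu)$-disjunct matrix, which is obtained by a straightforward adaptation of the classical D'yachkov--Rykov pair-counting argument to the setting where each column admits $\nu+1$ private rows within any small reference set of columns. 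Substituting yields $\tfrac{e^l t^l}{(l-1)^l}\leq\tfrac{2 e^l t^{l+1}}{(l-1)^l(d+\nu)^2}$, and rearranging delivers the stated lower bound on $t$.

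The principal technical subtlety is the contradiction step: one must verify that the arithmetic identities $i+(d-1)(i-1)=(i-1)d+1$ and $(l+1)+(d-1)l=ld+1$, combined with the range endpoints $w_j\leq(i-1)d+\nu+1$ and $w_j\leq ld+\nu+1$, leave \emph{exactly} at most $\nu$ uncovered positions---an off-by-one that the $\nu$-shift in the ranges was designed to absorb. The secondary obstacle is establishing the noisy Proposition \ref{prop:larged} lower bound that produces the $(d+\nu)^2$ factor; absent a cited reference, this requires a small standalone argument extending the classical pair-counting bound.
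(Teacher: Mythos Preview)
Your proposal is correct and follows essentially the same route as the paper's proof: the same $\nu$-shifted partition into $\mathcal{N}_1,\dots,\mathcal{N}_{l+1}$, the same contradiction argument showing every column in $\mathcal{N}_i$ carries a private size-$i$ set, the same binomial bounds $|\mathcal{N}_i|\leq\binom{t}{i}$ and $|\mathcal{N}_{l+1}|\binom{(l-1)d+\nu+2}{l+1}\leq\binom{t}{l+1}$, and the same use of $t\geq\binom{d+\nu+2}{2}\geq(d+\nu)^2/2$ to promote the $t^l$ term to $t^{l+1}$. The only difference is cosmetic: you explicitly flag the noisy analogue of Proposition~\ref{prop:larged} as needing a standalone argument, whereas the paper simply invokes it.
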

The proof of the above theorem can be found in Appendix \ref{proof:sc_noisy_general}. Similar to the noiseless case, we have a matching lower and upper bounds in terms of the scaling with respect to $n$ and order-wise the lower bound in Theorem \ref{thm:sc_noisy_general} is $\Theta\left((d+\nu)^{\frac{2}{l+1}} \sqrt[l+1]{n}\right)$ whereas the Kautz-Singleton construction provides $\Theta((d+\nu) \sqrt[l+1]{n})$ tests. 

\subsection{Sparse Tests}
\label{sec:sparse_tests}

In the sparse tests setting, we focus on a model where each test can include a limited number of items. In other words, we restrict the row weights of $M$, and derive lower and upper bounds on the minimum number of tests so that $M$ is a $(d, \nu)$-disjunct matrix in the more general noisy case (including the noiseless setting as a special case under $\nu = 0$).

Our first theorem provides a fundamental lower bound on the minimum required number of tests under a row weight constraint and an upper bound which is again based on the Kautz-Singleton construction.
\begin{theorem}\label{rowLB}
For all integers $d, n \geq 2$ and $\nu, l \geq 0$ such that $l d  + \nu + 1 \leq \sqrt[l+1]{n}$ and $\sqrt[l+1]{n}$ is a prime power,  the Kautz-Singleton construction provides a $t \times n$ matrix that is $(d, \nu)$-disjunct with constant row weights $\rho = n^{\frac{l}{l+1}}$ and
\begin{align*}
t = (l d + \nu + 1) \sqrt[l+1]{n}.
\end{align*}

On the other hand, for all integers $d, n \geq 2$ and $\nu \geq 0$, a $t \times n$ matrix that is $(d, \nu)$-disjunct with maximum row weight
$\rho_{\max}$ must satisfy
\begin{align*}
t \geq
\left\{ \begin{array}{ll}
\dfrac{(d+\nu+1) n}{\rho_{\max}} & \mbox{if $\rho_{\max} > \dfrac{d+\nu+1}{\nu+1}$,}\\
(\nu + 1) n & \mbox{if $\rho_{\max} \leq \dfrac{d+\nu+1}{\nu+1}$}\end{array} \right.
\end{align*}
\end{theorem}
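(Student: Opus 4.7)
For the achievability I would simply reuse the Kautz--Singleton construction as in Theorem~\ref{thm:sparsecodewords}, but now with the parameters $k_q=l+1$ and $t_q=ld+\nu+1$. This forces $q=\sqrt[l+1]{n}$ (which must be a prime power with $q\geq t_q$) and yields $t=qt_q=(ld+\nu+1)\sqrt[l+1]{n}$, constant column weight $w=t_q=ld+\nu+1$, and constant row weight $\rho=n/q=n^{l/(l+1)}$, where the constancy of $\rho$ uses the usual MDS counting argument recalled in Section~\ref{sec:prior}. Since the maximal correlation becomes $\lambda_{\max}=k_q-1=l$ and $w_{\min}=ld+\nu+1=d\lambda_{\max}+\nu+1$, the sufficient condition $w_{\min}\geq d\lambda_{\max}+\nu+1$ for $(d,\nu)$-disjunctness is met.

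The plan for the lower bound rests on a single structural dichotomy, which I expect to be the main obstacle to nail down cleanly: for every $(d,\nu)$-disjunct matrix $M$ and every column $M_i$, either (i) $w_i\geq d+\nu+1$, or (ii) $M_i$ has at least $\nu+1$ \emph{private rows} (rows on which $M_i$ is the unique column carrying a one). I would prove this by a case split on $p_i$, the number of private rows of $M_i$. If $w_i-p_i\leq d$, then picking one witnessing column through each non-private row produces a set $S\not\ni i$ with $|S|\leq d$ such that $\supp(M_i)\setminus\cup_{j\in S}\supp(M_j)$ is exactly the $p_i$ private rows of $M_i$, and $(d,\nu)$-disjunctness forces $p_i\geq\nu+1$. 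If instead $w_i-p_i>d$, picking $d$ distinct non-private rows and one witnessing column per row yields an $S$ with $|S|\leq d$ whose union already contains those $d$ rows, so $|\supp(M_i)\setminus\cup_{j\in S}\supp(M_j)|\leq w_i-d$ and $(d,\nu)$-disjunctness forces $w_i-d>\nu$, i.e.\ $w_i\geq d+\nu+1$.

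With the dichotomy in hand the remainder is just linear optimisation. Let $A=\{i:w_i\geq d+\nu+1\}$ and $B=[n]\setminus A$. The private rows contributed by columns of $B$ are pairwise distinct (each weight-$1$ row belongs to a unique column) and moreover disjoint from every row that carries a one from a column of $A$ (those private rows have total weight one). Hence the $t$ rows of $M$ split into at least $(\nu+1)(n-|A|)$ private-$B$ rows plus the number $t_A$ of rows that meet $A$, and the row-weight budget constrains $t_A\geq \sum_{i\in A}w_i/\rho_{\max}\geq(d+\nu+1)|A|/\rho_{\max}$ by double counting the ones that belong to columns of $A$. Combining,
\begin{align*}
t\geq \frac{(d+\nu+1)\,|A|}{\rho_{\max}}+(\nu+1)\bigl(n-|A|\bigr).
\end{align*}
The right-hand side is affine in $|A|\in[0,n]$ with slope $\tfrac{d+\nu+1}{\rho_{\max}}-(\nu+1)$, whose sign distinguishes exactly the two regimes of the theorem. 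When $\rho_{\max}\leq(d+\nu+1)/(\nu+1)$ the slope is nonnegative and minimising at $|A|=0$ yields $t\geq(\nu+1)n$; when $\rho_{\max}>(d+\nu+1)/(\nu+1)$ the slope is negative and minimising at $|A|=n$ yields $t\geq(d+\nu+1)n/\rho_{\max}$, matching both branches of the claim.
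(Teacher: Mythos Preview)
Your proof is correct and follows essentially the same approach as the paper: both establish that every column of weight at most $d+\nu$ must have at least $\nu+1$ private rows, then double-count ones against the row-weight bound and optimise over the number of light columns, arriving at the identical affine inequality in $|A|$ (equivalently in the paper's variable $t_1=n-|A|$). The only cosmetic difference is that the paper phrases the counting via a deletion step (removing the light columns together with $\nu+1$ private rows each and double-counting on the residual matrix) whereas you partition the rows directly into private-$B$ rows and rows meeting $A$; the achievability via Kautz--Singleton with $k_q=l+1$, $t_q=ld+\nu+1$ is identical.
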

\begin{proof}
We begin with the lower bound.
Suppose we have a $t \times n$ matrix that is $(d, \nu)$-disjunct and all the rows are bounded with weight $\rho_{\max}$. We consider the columns that have weight less than or equal to $d+\nu$. From the preceding discussions, all these columns must have at least $\nu + 1$ private rows. Let us delete these columns and $\nu + 1$ private rows for each column. Note that if a column has more than $\nu + 1$ private rows, we can arbitrarily choose any $\nu + 1$ of them and delete it. We will be deleting $\nu + 1$  private rows per such column and it is possible since they all have at least $\nu + 1$  private rows and a private row cannot be shared by two distinct columns.

Let us denote $t_1$ by the number of columns whose weight is less than or equal to $d + \nu$. From the discussion of private rows, it follows that $0 \leq t_1 \leq t/(\nu+1)$.
After the deletion operation, the dimension of the resulting matrix is $ (t - t_1 (\nu+1)) \times (n - t_1) $ and it is still $(d, \nu)$-disjunct since we are only deleting the zero-entries of the rest of the columns, therefore, the resulting matrix must still satisfy the $(d, \nu)$-disjunctiveness. We also note that the rows are still bounded with weight $\rho_{\max}$.

We observe that in the resulting matrix, all the columns will have at least weight of $d + \nu + 1$ and therefore the total number of ones in the resulting matrix can be lower bounded by $(d + \nu + 1)(n - t_1)$ and upper bounded by $\rho_{\max} (t - t_1 (\nu + 1))$. Hence,
\begin{align}
& \rho_{\max} (t - t_1 (\nu+1)) \geq (d + \nu + 1)(n - t_1), \nonumber \\
& \rho_{\max} t \geq (d + \nu + 1) n  + t_1 \left(\rho_{\max}(\nu + 1) - (d + \nu + 1) \right).
\label{rowmainineq}
\end{align}
If $\rho_{\max} \leq \frac{d+\nu+1}{\nu+1}$, then from $t_1 \leq t/(\nu + 1)$ and \eqref{rowmainineq} we have
\begin{align*}
\rho_{\max} t \geq (d + \nu + 1) n  + \frac{t}{\nu+1} \left(\rho_{\max}(\nu + 1) - (d + \nu + 1) \right).
\end{align*}
It follows that $t \geq (\nu + 1)n$. On the other hand, if $\rho_{\max} > \frac{d+\nu+1}{\nu+1}$, then from $t_1 \geq 0$ and \eqref{rowmainineq} we have
\begin{align*}
\rho_{\max} t \geq (d + \nu + 1) n.
\end{align*}
This yields that $t \geq \frac{(d + \nu + 1) n}{\rho_{\max}}$.

For the achievability, we use the Kautz-Singleton construction. Recall from the preliminaries section that the Kautz-Singleton construction provides a binary code with constant row weight $\rho = n/q$. We choose $t_q = ld+\nu+1$ and $k_q = l+1$. Since $n = q^{k_q}$, we obtain $q = \sqrt[l+1]{n}$ hence $t = (ld+\nu+1)\sqrt[l+1]{n}$ and $\rho = n^{\frac{l}{l+1}}$. In order to satisfy the requirement $q \geq t_q$ where $q$ is any prime power, we must ensure that $ld+\nu+1 \leq \sqrt[l+1]{n}$ and $q = \sqrt[l+1]{n}$ must be a  prime power. This completes the proof for the achievability.
\end{proof}
Observe that for any fixed integer $l \geq 1$ that satisfies the conditions stated in Theorem \ref{rowLB}, the number of tests we get using the Kautz-Singleton construction is $\Theta((d + \nu) \sqrt[l+1]{n})$ with constant row weight $\rho = n^{\frac{l}{l+1}}$. Plugging this in the lower bound of Theorem \ref{rowLB}, the required number of tests is also $\Theta((d + \nu) \sqrt[l+1]{n})$. It is interesting to note that the Kautz-Singleton construction is order-optimal in this setting. It has been known in the group testing literature that if the weights of the columns are bounded by $d$, one cannot do better than the identity matrix; i.e., $t = n$. Theorem \ref{rowLB} states an analogous result for the case with row weight constraint: if the weights of the rows are bounded by $\frac{d + \nu + 1}{\nu+1}$, we have $t  = (\nu + 1)n$ which means that we cannot do better than the individual testing. Another very interesting result that can be obtained from Theorem \ref{rowLB} is that for the special case where $l = 1$, the Kautz-Singleton construction is optimal to the exact constants.
\begin{corollary}
For all integers $d, n \geq 2$ and $\nu \geq 0$, the Kautz-Singleton construction provides an optimal (to the exact constants) $(d, \nu)$-disjunct matrix under the maximum row weight constraint $\rho_{\max} \leq \sqrt{n}$.
\end{corollary}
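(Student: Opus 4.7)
The plan is a direct application of Theorem~\ref{rowLB} with the special choice $l = 1$. First, I would invoke the achievability half of Theorem~\ref{rowLB} at $l=1$: setting $t_q = d+\nu+1$ and $k_q = 2$ in the Kautz-Singleton construction yields a $(d,\nu)$-disjunct matrix with constant row weight $\rho = \sqrt{n}$ and exactly $t = (d+\nu+1)\sqrt{n}$ tests, under the applicability conditions $d+\nu+1 \leq \sqrt{n}$ and $\sqrt{n}$ a prime power. This construction automatically respects the constraint $\rho_{\max} \leq \sqrt{n}$.

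For the matching lower bound, I would apply the lower bound half of Theorem~\ref{rowLB} to an arbitrary $(d,\nu)$-disjunct matrix obeying $\rho_{\max} \leq \sqrt{n}$ and argue that $t \geq (d+\nu+1)\sqrt{n}$ in either branch of the piecewise bound. When $\rho_{\max} > \frac{d+\nu+1}{\nu+1}$, Theorem~\ref{rowLB} directly gives $t \geq \frac{(d+\nu+1)n}{\rho_{\max}} \geq \frac{(d+\nu+1)n}{\sqrt{n}} = (d+\nu+1)\sqrt{n}$. When $\rho_{\max} \leq \frac{d+\nu+1}{\nu+1}$, Theorem~\ref{rowLB} gives $t \geq (\nu+1)n$, and since the Kautz-Singleton precondition $\sqrt{n} \geq d+\nu+1$ implies $(\nu+1)\sqrt{n} \geq d+\nu+1$, we obtain $(\nu+1)n = (\nu+1)\sqrt{n}\cdot\sqrt{n} \geq (d+\nu+1)\sqrt{n}$.

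Combining the two halves, the achievable $t = (d+\nu+1)\sqrt{n}$ from the Kautz-Singleton construction meets the universal lower bound with equality, establishing optimality to the exact constant. I anticipate no real obstacle here: the proof is essentially a mechanical specialization of Theorem~\ref{rowLB} followed by a two-case check. The only minor subtlety is verifying that the $(\nu+1)n$ branch of the lower bound still exceeds the Kautz-Singleton value under the implicit regime $d+\nu+1 \leq \sqrt{n}$ needed for the construction to be defined, which it does comfortably for every $\nu \geq 0$.
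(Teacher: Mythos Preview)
Your proposal is correct and follows exactly the route the paper intends: the corollary is stated without a separate proof, as an immediate specialization of Theorem~\ref{rowLB} at $l=1$, and your two-case verification of the lower bound is the natural way to make that specialization explicit. The only thing to note is that the paper's corollary is stated somewhat loosely ``for all integers $d,n\geq 2$ and $\nu\geq 0$'' while, as you correctly observe, the Kautz-Singleton construction and the matching comparison both implicitly rely on the applicability conditions $d+\nu+1\leq\sqrt{n}$ and $\sqrt{n}$ a prime power from Theorem~\ref{rowLB}.
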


We emphasize that the Kautz-Singleton construction in Theorem \ref{rowLB} provides us with codes that have constant row weight of $ n^{\frac{l}{l+1}}$; i.e., when $\rho$ is a fractional power of $n$ in the form $\frac{l}{l+1}$ in the interval $[1/2, 1)$. It is natural to ask whether there exist group testing codes with  $\rho = n^{\alpha}$ for an arbitrary $\alpha \in (0, 1)$ that achieves the lower bound in Theorem \ref{rowLB}. The following theorem shows the existence of such codes when $d = O(\textnormal{poly}(\log n))$ by using a random construction.
\begin{theorem}\label{random}
In the regime where $d = O(\textnormal{poly}(\log n))$, there exists a $t \times n$ matrix that is $(d, \nu)$-disjunct with a maximum row weight $\rho_{\max} = \Theta(n^{\alpha})$, for any $\alpha \in (0, 1)$, such that
\begin{align*}
t = O\left( (d + \nu) n^{1-\alpha}\right).
\end{align*}
\end{theorem}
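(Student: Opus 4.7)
The plan is to construct $M$ by the probabilistic method: draw each entry of $M$ as an independent $\mathrm{Bernoulli}(p)$ random variable with $p = \tfrac{1}{2}\, n^{\alpha - 1}$, and set $t = C(d + \nu)\, n^{1-\alpha}$ for a sufficiently large constant $C = C(\alpha)$. I will show that with positive probability $M$ simultaneously has maximum row weight $\Theta(n^\alpha)$ and is $(d, \nu)$-disjunct, which establishes existence.

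For the row-weight half, each row has weight $W_r \sim \mathrm{Bin}(n, p)$ with mean $\tfrac{1}{2}\, n^{\alpha}$, so a multiplicative Chernoff bound yields $\Pr[W_r > n^{\alpha}] \le e^{-\Omega(n^{\alpha})}$ and, symmetrically, a matching lower-tail estimate. A union bound over the polynomially many rows then guarantees $\rho_{\max} = \Theta(n^{\alpha})$ with probability $1 - o(1)$.

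For the disjunctness half, for each distinguished column $i \in [n]$ and each $S \subseteq [n]\setminus\{i\}$ with $|S| \le d$, I would let $X_{i,S} := \bigl| \supp(M_i) \setminus \bigcup_{j \in S} \supp(M_j) \bigr|$; by independence of the entries, $X_{i,S} \sim \mathrm{Bin}\bigl(t,\, p(1-p)^{|S|}\bigr)$. The hypothesis $d = O(\mathrm{poly}(\log n))$ combined with $p = \Theta(n^{\alpha - 1})$ gives $dp = o(1)$, hence $(1-p)^d \ge \tfrac{1}{2}$ for all sufficiently large $n$; therefore $\mathbb{E}[X_{i,S}] \ge tp/4 = \Omega\bigl(C(d + \nu)\bigr)$. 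Choosing $C$ large enough that this mean exceeds $2\nu$, a Chernoff lower-tail bound then gives $\Pr[X_{i,S} \le \nu] \le \exp\bigl(-\Omega(C(d + \nu))\bigr)$, which is exactly the event whose complement we need for the $(d,\nu)$-disjunctness definition.

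A union bound over the at most $n \binom{n-1}{d} \le n^{d+1}$ pairs $(i, S)$ bounds the total failure probability by $n^{d+1} \exp\bigl(-\Omega(C(d+\nu))\bigr)$, which tends to zero once $C(d+\nu) \gtrsim (d+2)\log n$. The main obstacle is precisely this last balance: in the worst case (very small $d + \nu$) one is forced to take $C = \Theta\bigl(\log n / (d + \nu)\bigr)$, but because the theorem restricts to $d = O(\mathrm{poly}(\log n))$ this contributes only a polylogarithmic factor, which is benign inside the stated $O\bigl((d + \nu)\, n^{1-\alpha}\bigr)$ bound; the same assumption is also what keeps $(1-p)^d$ bounded away from $0$ so that the mean estimate survives. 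Combining the row-weight and disjunctness events, both hold simultaneously with positive probability, producing the desired matrix.
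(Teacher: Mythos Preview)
Your probabilistic approach is natural, but the last paragraph hides a genuine gap rather than resolving it. With i.i.d.\ $\mathrm{Bernoulli}(p)$ entries and $p=\Theta(n^{\alpha-1})$, the Chernoff lower tail gives only $\Pr[X_{i,S}\le\nu]\le\exp\bigl(-\Omega(\mu)\bigr)$ with $\mu=tp(1-p)^{|S|}=\Theta\bigl(C(d+\nu)\bigr)$; after the $n^{d+1}$ union bound you therefore need $C(d+\nu)\gtrsim (d+2)\log n$, so whenever $\nu=O(d)$ (in particular the noiseless case $\nu=0$) this forces $C=\Theta(\log n)$ and hence $t=\Theta\bigl((d+\nu)n^{1-\alpha}\log n\bigr)$, which is \emph{not} $O\bigl((d+\nu)n^{1-\alpha}\bigr)$. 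The extra $\log n$ is not an artifact of a loose tail bound: with $C$ held constant, each column has weight $\mathrm{Bin}(t,p)$ of mean $\Theta(d+\nu)$, so $\Pr[w_i\le\nu]\ge(1-p)^t=e^{-\Theta(d+\nu)}$, and among $n$ columns one will almost surely have weight at most $\nu$, which by itself violates $(d,\nu)$-disjunctness. Thus the i.i.d.\ Bernoulli model provably cannot reach the stated $t$, and calling the residual logarithmic factor ``benign'' is incorrect.

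The fix, and what the paper does, is to randomize over columns of \emph{fixed} Hamming weight $w=c(d+\nu)$: draw each column independently and uniformly from the $\binom{t}{w}$ weight-$w$ vectors. Conditioned on the supports of any $d$ columns (a set of size at most $dw$), the probability that a fresh weight-$w$ column has at most $\nu$ ones outside that set is at most $\binom{dw}{w-\nu}\binom{t-w+\nu}{\nu}/\binom{t}{w}$, and the crucial estimate $\binom{t}{w}\ge(t/w)^w=n^{(1-\alpha)w}$ makes this probability decay like $n^{-\Theta(w)}=\exp\bigl(-\Theta((d+\nu)\log n)\bigr)$ rather than $\exp\bigl(-\Theta(d+\nu)\bigr)$. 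That additional $\log n$ in the exponent is exactly what lets a constant $c=c(\alpha)$ beat the $n^{d+1}$ union bound and deliver $t=O\bigl((d+\nu)n^{1-\alpha}\bigr)$. Your row-weight concentration argument then goes through essentially unchanged, since in the constant-column-weight model each entry in a fixed row is still an independent $\mathrm{Bernoulli}(w/t)$ across columns, giving row weights $\mathrm{Bin}(n,w/t)$ with mean $n^{\alpha}$.
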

\begin{proof}
Let us construct the matrix ${M}$ randomly as follows. For a fixed $\alpha \in (0, 1)$, we take $t = c (d + \nu) n^{1 - \alpha} $ for some constant $c > 0$ that we will fix later. We set the size of the matrix ${M}$ as $t \times n$ and choose the columns of this matrix uniformly at random among the codewords of size $t$ with weight $w$ where we set $w = c (d + \nu)$.

We next calculate the probability of not having a $(d, \nu)$-disjunct matrix. Let us fix $d + 1$ columns of the matrix ${M}$
and denote them as ${M}_1, \ldots, {M}_{d+1}$. Let us further fix a single column among them, say ${M}_{d+1}$. The probability of violating the condition $\left\vert\textnormal{supp}(M_{d+1}) \backslash \cup_{j \in S} \textnormal{supp}(M_j)  \right\vert > \nu$ where $S= \{1,2 \ldots, d\}$ can be bounded as 
\begin{align*}
\mathbb{P} \left( \left \vert \textnormal{supp}(M_{d+1}) \backslash \cup_{j \in S} \textnormal{supp}(M_j)  \right \vert \leq \nu \right)
\leq \dfrac{\binom{dw}{w-\nu} \binom{t-w+\nu}{\nu}}{\binom{t}{w}},
\end{align*}
since ${M}_1, \ldots, {M}_{d}$ can have at most $d w$ non-intersecting number of ones and the numerator is an upper bound on the total number of codewords that will have at least $w-\nu$ intersection between $\textnormal{supp}(M_{d+1})$ and $\cup_{j \in S} \textnormal{supp}(M_j) $. Using union bound, the probability that the matrix ${M}$ does not satisfy the $(d, \nu)$-disjunctiveness property can be bounded as
\begin{align*}
\mathbb{P} \left( \textnormal{${M}$ is not $(d, \nu)$-disjunct} \right) \leq (d+1)
\binom{n}{d+1} \dfrac{\binom{dw}{w-\nu} \binom{t-w+\nu}{\nu}}{\binom{t}{w}}.
\end{align*}
We can further bound this as
\begin{align}
\mathbb{P}  \left( \textnormal{${M}$ is not $d$-disjunct} \right) & \overset{(i)}{\leq}
(d+1) \left( \dfrac{n e}{d+1} \right)^{d+1}
\dfrac{\left( \dfrac{d w e}{w-\nu} \right)^{w-\nu} \left( \dfrac{(t-w+\nu)e}{\nu} \right)^{\nu}}{\left( \dfrac{t}{w} \right)^w} \nonumber \\
& = (d+1) \left( \dfrac{n e}{d+1} \right)^{d+1}  \dfrac{ \left( \dfrac{d w}{w-\nu} \right)^{w-\nu} \left( \dfrac{t-w+\nu}{\nu} \right)^{\nu} e^w}{n^{w(1-\alpha)}} \nonumber \\
& \overset{(ii)}{\leq}  \dfrac{n^{d+1} e^{d+w+1}}{d^d} \cdot \dfrac{(2d)^{w-\nu} (t/\nu)^{\nu}}{n^{w(1-\alpha)}} 
\label{ran}
\end{align}
where $(i)$ is due to the inequality
$\left( \frac{n}{k} \right)^k \leq \binom{n}{k} \leq  \left( \frac{n e}{k} \right)^k$ and $(ii)$ is due to $w-\nu \geq \nu/2$ for $c\geq 2$ and $t-w+\nu \leq t$ and $d+1 \geq d$. Taking the logarithm of the last term in \eqref{ran} gives us
\begin{align*}
(d+1) \log n + d + w + 1 + (w - \nu) \log 2 + (w - \nu - d) \log d + \nu \log(t/\nu) - w(1 - \alpha)\log n
\leq - \tilde{c} (d+v) \log n
\end{align*}
that holds with a constant $\tilde{c}>0$ for sufficiently large $n$ and appropriately chosen constant $c>4/(1-\alpha)$ when $d = {O}(\textnormal{poly}(\log n))$. Hence,
\begin{align}
\mathbb{P}  \left( \textnormal{${M}$ is not $(d, \nu)$-disjunct} \right) \leq
n^{-\tilde{c} (d+v)}.
\label{probdisjunct}
\end{align}
We next investigate the weights of the rows of the matrix ${M}$. We consider the first row. Note that by our random construction, it follows that each
entry in the first row is independent and identically distributed with Bernoulli distribution where the probability of having one is $\frac{\binom{t-1}{w-1}}{\binom{t}{w}} = \frac{w}{t}$. Denoting $\rho_1$ as the weight of the first row, we have $\mathbb{E}[\rho_1] = \frac{w}{t} n = n^{\alpha}$. Using Hoeffding's inequality along with union bound, we achieve the following upper bound on the probability that there exists a row with its weight deviating from $n^{\alpha}$
\begin{align}
\mathbb{P}(\textnormal{$\exists$ $i \in [t]$ s.t.  $|\rho_{i} - n^{\alpha}| \geq \delta  n^{\alpha}$}) & \leq t 2 e^{-2 n^{\alpha} \delta^2} \nonumber \\ & = 2 c (d+\nu) n^{1 - \alpha} e^{- 2 n^{\alpha} \delta^2}
\label{deviate}
\end{align}
for some fixed constant $0 < \delta < 1$. For sufficiently large $n$, the right-hand side of \eqref{deviate} can be bounded as
$e^{- \bar{c} n^{\alpha}}$ for some constant $\bar{c} > 0$.

Together with \eqref{probdisjunct}, using union bound, it follows that with probability approaching to 1, we get a $(d, \nu)$-disjunct matrix with row weight $\rho = \Theta(n^\alpha)$ and $t = \Theta((d +\nu) n^{1- \alpha})$ tests.
\end{proof}
In the regime where $d = O(\textnormal{poly}(\log n))$, the lower bound in Theorem \ref{rowLB} suggests that the minimum number of tests is $\Omega( (d+\nu) n^{1- \alpha})$ when $\rho = \Theta( n^{\alpha} )$ for some $\alpha \in (0, 1)$. The randomized construction in Theorem \ref{random} proves that there exist codes that achieve $t = \Theta( (d+\nu) n^{1- \alpha})$. This matches the lower bound in Theorem \ref{rowLB}.

\section{Encoding \& Decoding}
\label{sec:decoding}

We have so far focused on investigating the fundamental trade-offs between $t$ and $(d, \nu, n)$ under constraints on either the number of items that can participate in a test (sparse tests) or the number of tests an item can participate in (sparse codewords) without considering the encoding or decoding complexities. 
However, due to the emerging applications involving massive datasets there is a recent research effort towards low-complexity decoding schemes \cite{mahdi2009, indyk2010, ngo11, lee2016saffron, jaggi2017}.
The computational complexities of encoding and decoding might be just as critical, therefore, it is desirable not to sacrifice on encoding or decoding complexity to achieve the optimal trade-off between $t$ and $(d, \nu, n)$. In this section, we discuss the encoding and decoding complexities of the explicit constructions we presented earlier in this paper.

In the classical combinatorial group testing framework, the interest has been on designing testing strategies that can be decoded in $\textnormal{poly}(t)$-time while achieving the best known upper bound $t = O(d^2 \log n)$. Guruswami et al. present an efficiently decodable ($O(t)$ time decoding) $d$-disjunct matrix in \cite{Guruswami2004} and their constructions require $O(d^4 \log n)$ tests. The first result that achieves efficient decoding time while matching the $O(d^2 \log n)$ bound on the number of tests was recently presented in \cite{indyk2010}. Furthermore, the construction in  \cite{indyk2010} can be derandomized in the regime $d = O(\log N/\log \log N)$.  Later in \cite{ngo11} the constraint on $d$  is removed and  an explicit construction is provided that can be decoded in time $\textnormal{poly}(t)$. The main idea considered in \cite{indyk2010} was using \em list-disjunct \em matrices and a similar idea was considered in \cite{mahdi2009} to obtain explicit constructions of non-adaptive group testing schemes that handle noisy tests and return a small super-set of the defective items. 

We now show that our explicit constructions can be decoded in  $(\textnormal{poly}(d) + O(t))$-time and each entry in any codeword can be computed in space $\textnormal{poly}(\log n)$ by following a similar approach to \cite{indyk2010}. This shows that these constructions not only (nearly) achieve the fundamental lower bound in the energy constrained setting, but also do that with a favorable encoding and decoding complexity. We begin with the following result which is based on the noiseless setting.  

\begin{theorem}\label{thm:decoding}
For all integers $d, n \geq 2$, $l \geq 1$ such that $l d  + 1 \leq \sqrt[l+1]{n}$ and $\sqrt[l+1]{n}$ is a prime power, the Kautz-Singleton construction provides a $t \times n$ matrix that is $d$-disjunct with constant column weights $w = l d + 1$, constant row weights $\rho = n^{\frac{l}{l+1}}$, and $ t = (l d + 1) \sqrt[l+1]{n} $ tests.
Furthermore, the decoding can be done in time $\textnormal{poly}(d) + O(t)$ and each entry can be computed in space $\textnormal{poly}(\log n)$.
\end{theorem}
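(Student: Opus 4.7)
The plan is to exploit the algebraic structure of the Kautz-Singleton construction. Recall that each column index $j \in \{0,1,\ldots,n-1\}$ corresponds, via base-$q$ expansion with $q = n^{1/(l+1)}$, to a tuple of coefficients $(c_0,\ldots,c_l) \in \mathbb{F}_q^{l+1}$, hence to a polynomial $p_j(x) = \sum_{i=0}^{l} c_i x^i$ of degree at most $l$. The $t_q = ld+1$ evaluations $p_j(\alpha_1),\ldots,p_j(\alpha_{t_q})$ are then replaced by identity-code images, so the $t$ rows of $M$ partition into $t_q$ blocks of $q$ rows each, and the row in block $b$ labeled by $a \in \mathbb{F}_q$ has $M_{ij} = 1$ iff $p_j(\alpha_b) = a$. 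To compute an arbitrary entry $M_{ij}$, I would decode $i$ into a pair $(b,a)$, expand $j$ in base $q$ to read off the coefficients of $p_j$, evaluate $p_j(\alpha_b)$ by Horner's rule in $\mathbb{F}_q$, and compare to $a$. Each field element fits in $O(\log q)$ bits, Horner's rule uses $O(l)$ field operations, and (fixing an explicit representation of $\mathbb{F}_q$ via an irreducible polynomial of $O(\log n)$ bits) every intermediate quantity is of size $\mathrm{poly}(\log n)$, so the entire computation runs in $\mathrm{poly}(\log n)$ space.

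For decoding, I would implement the cover decoder implicitly via list recovery on the outer Reed-Solomon code. First, a single pass over the test-outcome vector $Y \in \{0,1\}^t$, costing $O(t)$ time, produces for each block $b \in [t_q]$ the set $S_b = \{a \in \mathbb{F}_q : Y \text{ has a } 1 \text{ at position } (b,a)\}$ together with a bitmap of $S_b$. Because $Y = \bigvee_{j \in D} M_j$ where $D$ is the (unknown) defective set of size at most $d$, and each defective column contributes exactly one $1$ per block, one has $|S_b| \leq |D| \leq d$ for every $b$. By construction, column $j$ is covered by $Y$ precisely when $p_j(\alpha_b) \in S_b$ for all $b$, so by the $d$-disjunctness already established in Theorem \ref{thm:sparsecodewords}, the defective items are exactly the polynomials $p$ of degree $\leq l$ with $p(\alpha_b) \in S_b$ for every $b$.

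To enumerate these polynomials efficiently, fix the first $l+1$ blocks and iterate over all tuples $(a_1,\ldots,a_{l+1})$ with $a_b \in S_b$; there are at most $d^{l+1}$ such tuples. For each tuple, Lagrange interpolation in $\mathrm{poly}(l,\log q)$ time yields the unique polynomial $p$ of degree $\leq l$ with $p(\alpha_b)=a_b$ on those blocks, and the remaining $ld-l$ blocks are verified by one field evaluation and one $O(1)$ bitmap lookup each, costing $O(ld)$ per candidate. The total cost of enumeration and verification is therefore $d^{l+1} \cdot O(ld) \cdot \mathrm{poly}(\log n) = \mathrm{poly}(d)$ (treating $l$ as a constant), giving an overall decoding time of $\mathrm{poly}(d) + O(t)$.

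The main obstacle is controlling the number of interpolation candidates: the naive count $q^{l+1}=n$ is useless, and we need the crucial list-size bound $|S_b| \leq d$, which is where both the identity-concatenation structure (one $1$ per block per column) and the size bound $|D| \leq d$ feed in. Everything else reduces to standard finite-field arithmetic whose bit complexity is $\mathrm{poly}(\log n)$, and the correctness of the output is immediate from $d$-disjunctness since the list-recovery output coincides with the cover decoder's output.
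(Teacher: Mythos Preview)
Your first step (partitioning $Y$ into $t_q$ blocks, extracting the lists $S_b$ with $|S_b|\le d$ in $O(t)$ time) and your entry-computation argument coincide with the paper's proof.  The genuine difference is in how the list-recovery step is carried out.  The paper treats it as a black box: it observes that finding all RS codewords $(c_1,\ldots,c_{t_q})$ with $c_i\in S_i$ is an instance of error-free list recovery, checks the feasibility condition $d<\lceil t_q/(k_q-1)\rceil$, and then invokes Alekhnovich's soft-decoding algorithm, which runs in time $\mathrm{poly}(d)\cdot t_q\log^2 t_q\log\log t_q$.  You instead give a self-contained, elementary algorithm: enumerate the at most $d^{l+1}$ tuples on the first $l+1$ blocks, interpolate, and verify on the remaining blocks.

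Your route is simpler and avoids any external citation, but the price is the $d^{l+1}$ factor, which is $\mathrm{poly}(d)$ only when $l$ is a fixed constant --- a restriction you flag explicitly.  The paper's black-box route gives running time polynomial in $t_q=ld+1$, hence polynomial jointly in $d$ and $l$, so it matches the stated bound for all $l$ in the theorem's quantifier.  Since the constraint $ld+1\le n^{1/(l+1)}$ permits $l$ as large as $\Theta(\log n/\log d)$, the exponential-in-$l$ cost of your interpolation sweep is a real limitation relative to what the theorem claims.  If you are content with the fixed-$l$ regime (which is the one the paper emphasizes in its applications), your argument is a valid and arguably cleaner alternative; to recover the full statement you would need to replace the brute-force enumeration by a genuine RS list-recovery routine as the paper does.
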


\begin{proof}
We describe the decoding procedure as follows. For an output vector $Y \in \{0, 1\}^t$, we can consider it as $Y = (Y_1, \ldots, Y_{t_q})$, a vector in $(\{0, 1\}^q)^{t_q}$ where $t_q = ld+1$ is the block length of the outer Reed-Solomon code.
Note that since we use the identity code as the inner code, for each $i \in [t_q]$, $Y_i$ will have at most $d$ ones and the position of ones will correspond to the symbols of defective items in the outer code. We now apply the following procedure. For each $i \in [t_q]$, we create the sets $S_i \subseteq [q]$ such that $S_i$ consists of the set of position of ones in $Y_i$. It follows that $\lvert S_i \rvert \leq d$, for every $i \in [t_q]$. We further have the following property. For any defective item, the corresponding codeword $(c_1, \ldots, c_{t_q})$ in the outer code must satisfy $c_i \in S_i$ for all $i \in [t_q]$ and for any non-defective item, the corresponding codeword $(c_1, \ldots, c_{t_q})$ in the outer code will include a symbol $c_i$ such that $c_i \notin S_i$. Note that this step can be done in $O(t)$ time.

The second step is to output all codewords $(c_1, \ldots, c_{t_q})$ in the outer code such that
$c_i \in S_i$ for all $i \in [t_q]$ given $S_i \subseteq [q]$ with $\lvert S_i \rvert \leq d$ for every $1 \leq i \leq t_q$. This problem is an instance of the error-free list recovery problem \cite{rudra_lr, indyk2001, zuckerman2004}. When each set $S_i$ has at most $s$ elements, it is referred to as list recovering with input lists of size $s$. It has been shown that the corresponding error-free list recovery problem can be solved in polynomial time for a $[t_q, k_q, t_q-k_q+1]_q$ Reed-Solomon code as long as the parameter $s$ satisfies $s < \lceil \frac{t_q}{k_q-1} \rceil$ \cite{rudra_lr, sudan99}. We note that in our case, we have $s = d$, $t_q = l d + 1$, and $k_q = l+1$, therefore it satisfies that
$s < \lceil \frac{t_q}{k_q-1} \rceil$. It follows that the second step can be done in time $\textnormal{poly}(t_q)$. In particular, we can use the algorithm in \cite{Alekhnovich02} that runs in time $\textnormal{poly}(d) \cdot t_q \log^2 t_q \log \log t_q$ which is $\textnormal{poly}(d)$ with our choice of $t_q$. The error-free list recovery problem that we are interested in solving is a special case of a more general problem known as soft decoding which is defined as follows. The decoder is given a set of non-negative weights corresponding to each row and each symbol ($w_{i, \alpha}$, $i \in [t_q]$, $\alpha \in [q]$) and a threshold $W \geq 0$. The decoder needs to output all codewords $(c_1, \ldots, c_{t_q})$  in $q$-ary code of block length $t_q$ that satisfy
\begin{align*}
\sum \limits_{i = 1}^{t_q} w_{i, c_i} \geq W.
\end{align*}
Note that the error-free list recovery is a special case of soft decoding under the parameters $W = t_q$ and $w_{i, \alpha} = 1$ for $\alpha \in S_i$ and $w_{i, \alpha} = 0$ otherwise. The soft decoding is related to weighted polynomial reconstruction problem which is defined as follows. For the given integer parameters $k$ and $N$ with $N$ points $(x_1, y_1), \ldots, (x_N, y_N)$ and their corresponding weights $w(x_1, y_1), \ldots, w(x_N, y_N)$, the goal is to output all polynomials of degree at most $k$ such that $\sum_{i : p(x_i) = y_i} w(x_i, y_i) \geq W$. The algorithm presented in \cite{Alekhnovich02} solves this problem and runs in time $\textnormal{poly}(d)$ translated to our case. Combining the two steps, we conclude that the decoding can be done in time $\textnormal{poly}(d) + O(t)$.

The reconstruction of the matrix with the claimed space complexity follows from the fact that any position in a Reed-Solomon codeword can be computed in space $\textnormal{poly}(k_q, \log q)$ and any bit value of the identity inner code can be computed in $O(\log q)$ space.
\end{proof} 

We can extend these results to the noisy case as well by following similar ideas and modifying the parameters accordingly. Our next result whose proof we give in Appendix \ref{proof:decoding_noisy}
shows the validity of efficient decoding in the case of arbitrary errors. 

\begin{theorem}\label{thm:decoding_noisy}
For all integers $d, n\geq 2$, $l \geq 1$, and $\nu \geq 0$ such that $l d  + (l+2) \nu + 1 \leq \sqrt[l+1]{n}$ and $\sqrt[l+1]{n}$ is a prime power, the Kautz-Singleton construction provides a $t \times n$ matrix that is $(d, \nu)$-disjunct with constant column weights $w = l d + (l+2) \nu + 1$, constant row weights $\rho = n^{\frac{l}{l+1}}$, and $ t = (l d +(l+2) \nu + 1) \sqrt[l+1]{n} $ tests.
Furthermore, the decoding can be done in time $\textnormal{poly}(d, \nu) + O(t)$ and each entry can be computed in space $\textnormal{poly}(\log n)$.
\end{theorem}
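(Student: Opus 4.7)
My plan is to mirror the two-stage decoder of Theorem \ref{thm:decoding}, substituting an error-tolerant Reed-Solomon list-recovery routine for the error-free one used there. The inflated outer length $t_q = ld + (l+2)\nu + 1$ is calibrated so that, after up to $\lfloor \nu/2 \rfloor$ corrupted test outcomes have been absorbed, the induced list-recovery instance still lies comfortably inside the Guruswami--Sudan (Johnson-bound) decoding radius.

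First I would record the structural parameters: choosing $k_q = l+1$ and $t_q = ld + (l+2)\nu + 1$ in the Kautz-Singleton construction forces $q = \sqrt[l+1]{n}$, so $t = q\,t_q = (ld + (l+2)\nu + 1)\sqrt[l+1]{n}$, column weight $w = t_q$, and (by the usual Reed-Solomon symmetry invoked in the proof of Theorem \ref{thm:decoding}) row weight $\rho = q^{k_q-1} = n^{l/(l+1)}$. Since the inner identity code doubles the outer minimum distance, $\lambda_{\max} = k_q - 1 = l$, so $w - d\lambda_{\max} = (l+2)\nu + 1 > \nu$ and the code is $(d,\nu)$-disjunct.

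For decoding, I would read $Y \in \{0,1\}^t$ as $(Y_1,\dots,Y_{t_q}) \in (\{0,1\}^q)^{t_q}$ and in $O(t)$ time form $S_i = \{\alpha \in [q] : Y_i[\alpha] = 1\}$. When at most $\lfloor \nu/2 \rfloor$ bits of $Y$ are flipped, every defective item's outer codeword $(c_1,\dots,c_{t_q})$ has $c_i \in S_i$ in at least $\tau := t_q - \lfloor \nu/2 \rfloor$ positions, while $(d,\nu)$-disjunctness forces any non-defective outer codeword to have $c_i \notin S_i$ in at least $\lceil \nu/2 \rceil + 1$ positions, i.e.\ agreement strictly below $\tau$. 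Hence the defective set equals exactly the collection of Reed-Solomon codewords agreeing with the lists $S_1,\dots,S_{t_q}$ (each of size at most $s := d + \lfloor \nu/2 \rfloor$) in at least $\tau$ positions.

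The main step is to enumerate this list in $\textnormal{poly}(d,\nu)$ time. I would invoke the Guruswami--Sudan list-recovery algorithm for Reed-Solomon codes, which succeeds whenever the Johnson-bound condition $\tau^2 > (k_q-1)\,s\,t_q$ holds. A direct expansion gives $\tau^2 - (k_q-1)\,s\,t_q = t_q\bigl((l+2)\lceil \nu/2 \rceil + 1\bigr) + \lfloor \nu/2 \rfloor^2 > 0$, so the condition is satisfied with margin on the order of $t_q$; verifying this inequality is the only nontrivial calculation, and the $(l+2)\nu$ slack built into $t_q$ is precisely what creates the needed margin. The running time is polynomial in $t_q$ and $s$, hence $\textnormal{poly}(d,\nu)$ for fixed $l$, and when added to the $O(t)$ preprocessing yields the claimed $\textnormal{poly}(d,\nu) + O(t)$ decoding bound. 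Finally, the $\textnormal{poly}(\log n)$-space computability of $M_{ij}$ is inherited unchanged from Theorem \ref{thm:decoding}: the $i$-th symbol of the $j$-th Reed-Solomon codeword is one polynomial evaluation over $\mathbb{F}_q$, and the inner identity lookup is a single $O(\log q)$ comparison.
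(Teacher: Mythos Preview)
Your proposal is correct and mirrors the paper's two-stage approach: extract the symbol lists $S_i$ in $O(t)$ time, then run Guruswami--Sudan list recovery on the outer Reed--Solomon code after verifying the Johnson-bound inequality $\tau > \sqrt{(k_q-1)\,s\,t_q}$. The only difference is that the paper budgets for a full $\nu$ corrupted outcomes (taking $\tau = t_q - \nu$ and $s = d + \nu$, and arguing separation directly from the Kautz--Singleton overlap bound rather than from generic $(d,\nu)$-disjunctness) instead of your $\lfloor \nu/2 \rfloor$; the Johnson-bound check then reduces to $(t_q-\nu)^2 - l(d+\nu)t_q = t_q + \nu^2 > 0$, which is the same computation as yours with the tighter parameters.
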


\section{Low-Energy Massive Random Access}
\label{sec:application}

\begin{figure}
\begin{center}
\scalebox{0.8}{\includegraphics{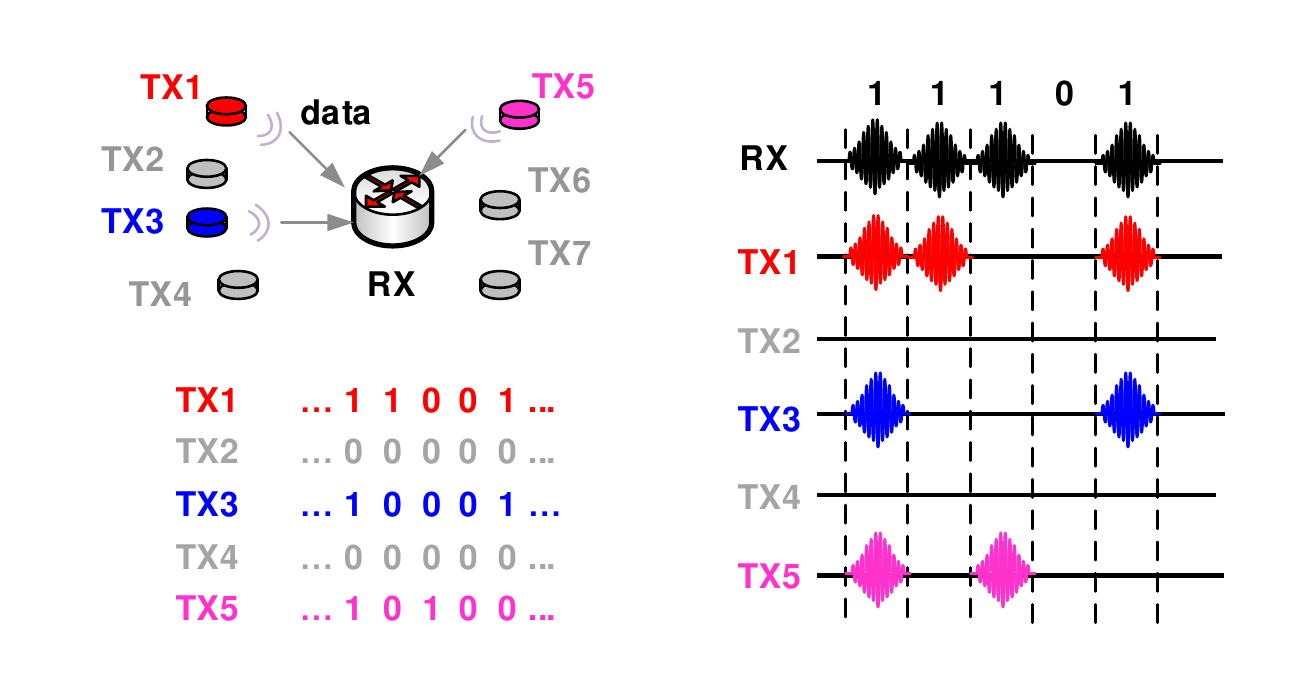}}
\caption{Massive random access with on-off keying at the transmitters and energy detection at the receiver.}
\label{fig:fig1}
\end{center}
\end{figure}

In this section we discuss an application of our framework to wireless random access. Consider  $n$ devices (or sensors) that are associated with a single access point and assume that at most $d$ of them can be active at any given time, where $n\gg d\gg 1$. We adopt the following modulation and detection technique at the transmitters and the receiver respectively: each device uses on-off signaling; i.e., it transmits a binary sequence of 0's and 1's, which corresponds to either transmitting a pulse or no pulse in every time-slot. The access point simply detects whether or not there is energy in the channel in every time-slot. This leads to a (potentially noisy) Boolean OR-channel from the devices to the access point. This simple modulation and detection technique is often used in low-rate applications in practice due to its simplicity. Energy detection does not require any channel state information at the receiver and thus  it eliminates the need for channel training and estimation. This setting is depicted in Figure \ref{fig:fig1}. To simplify the discussion, we focus on the device discovery problem, though as we argue in \cite{inan17allerton} the same group testing  framework can be used to develop solutions for jointly discovering active devices and transmitting data, and for transmitting data without communicating device identities. The device discovery problem can be formulated as follows. Given $n$ devices, design a length-$t$ binary signature for each device (i.e., $M_i\in \{0,1\}^t$ for $i=1,\dots,n$) such that for any set $S\subset\{1,\dots,n\}$ of active devices such that $|S|\leq d$, we can exactly identify the set $S$ (the active devices) from
$$
Y= \left(\bigvee_{i\in S} M_i \right) + v,
$$
where $\bigvee$ denotes entry-wise Boolen-OR operation and '$+$' denotes entry-wise modulo-2 addition, and $v$ is a length-$t$ binary vector representing occasional errors in the energy detection at the receiver (flips of the output) due to noise. We assume that $v$ has at most $\nu$ ones. 

It can be readily observed that the problem statement above corresponds to a (non-adaptive) combinatorial  group testing problem and the binary signature vectors $M_i$ can be taken to be the columns of a $t\times n$ $(d,\nu)$-disjunct group testing matrix. In particular, $(d,\nu)$-disjunct matrices with small $t$ will lead to short binary signatures for the devices. 
In practice, transmitters are subject to energy constraints, which are especially limiting in  IoT applications where devices are required to operate on small batteries for many years or harvest their energy from their environment \cite{HuseyinWioptMAC, HuseyinITTransMAC}. In such applications, while it may  still be desirable to minimize the length of the signature codewords to increase spectral efficiency, it may be also desirable to limit the energy needed to transmit each codeword. The energy spent for transmitting a codeword is proportional to the number of pulses, i.e. the number of ``1''s, in the codeword (ignoring the standby energy for keeping the device active). Using our previous notation, this corresponds to imposing a constraint on the total number of ``1''s in each column of $M$. Note that if energy efficiency were the only metric of interest, we could have resorted to the trivial solution that tests every item individually. This leads to a single ``1'' in each column of $M$ but the length of each column, and therefore that of our signature codewords, becomes equal to $n$. Our sparse group testing framework provides a way to optimally trade energy and spectral efficiency in this framework.

\section{Conclusion \& Discussion}
\label{sec:conclusion}
In this paper, we studied the combinatorial group testing problem under constraints on the number of items that can participate in each test (sparse tests) or the number of tests each item can participate in (sparse codewords). We developed explicit group testing codes that minimize the number of tests under such constraints and proved that they are order optimal or nearly order optimal. Our results show that the minimal number of tests exhibits a particularly favorable behavior in the sparse codewords case, since the number of tests decreases drastically when the number of tests each item  participates in increases beyond a bare minimum.



There are a few remaining gaps in our results which would be interesting to consider in future work. Firstly, as the number of tests per item increases linearly with $d$ (i.e., $w = ld+1$), the gap between our lower and upper bounds on $t$ increases as a function of $d$.  It would be interesting to see if this gap can be closed with sharper lower bounds or improved constructions that yield better performance. Secondly, Kautz and Singleton's construction provides $d$-disjunct matrices with row weight $\rho = n^{\frac{l}{l+1}}$. Therefore, the Kautz and Singleton's construction cannot achieve a row weight of $n^{\alpha}$ for  $\alpha < 1/2$. Nevertheless, as proven in Theorem \ref{random}, $d$-disjunct matrices with $\rho = n^{\alpha}$ for any real number $\alpha \in (0, 1)$ do exist. It would be interesting to know if there are optimal explicit constructions that can achieve $\rho = n^{\alpha}$ for some $\alpha < 1/2$. Finally, while we have exclusively focused on the combinatorial group testing framework in the current paper, where the defective set is to be exactly recovered, we show in \cite{inan18arxiv} that the Kautz and Singleton's construction we consider in this work is also relevant in the probabilistic setting, where the defective set is to be recovered with a small probability of error. In \cite{inan18arxiv}, we build on the Kautz and Singleton's construction to develop the first-order
optimal strongly explicit construction for probabilistic group testing.

\bibliographystyle{IEEEbib}
\bibliography{references}

\appendix

\subsection{Proof of Proposition \ref{prp:sc_noisy_trivial}}
\label{proof:sc_noisy_trivial}

The achievability can trivially be obtained by individual testing, i.e., testing each item alone $\nu + 1$ times. Note that this satisfies $(d, \nu)$-disjunctiveness, therefore, $t(d, \nu, n) \leq (\nu+1) N$.

We can show that for a $t \times n$ binary matrix $M$ that is $(d, \nu)$-disjunct with the condition $w_{\max} \leq d + \nu$, all columns need at least $\nu+1$ private rows, hence $t(d, \nu, n) \geq (\nu + 1) N$. Assume there exists a column $i \in [N]$ with at most $\nu$ private rows. It follows that this column has at least $w_i - \nu$ non-private rows. Fix any $w_i - \nu$ non-private rows. Since $w_{\max} \leq d + \nu$, it follows that $w_i - \nu \leq d$ and we can find at most $d$ other columns covering these rows. Therefore, there exists a set $S$ of columns with $|S| \leq d$ and $i \notin S$ such that $|\textnormal{supp}(M_i) \backslash \cup_{j \in S} \textnormal{supp}(M_j)  | \leq \nu$, which contradicts with $(d, \nu)$-disjunctiveness of $M$. 

\subsection{Proof of Theorem \ref{thm:sc_noisy_first}}
\label{proof:sc_noisy_first}

Let ${M}$ be a $t \times n$ $(d, \nu)$-{disjunct} matrix with $w_{\max} \leq d+\nu+1$. We will separate the columns of $M$ into disjoint groups whose union is $[n]$. We define
\begin{align*}
&\mathcal{N}_1 \coloneqq \left\lbrace j \in [n] \ | \ w_j \leq d+\nu \ \textnormal{or} \ w_j = d+\nu+1 \ \textnormal{and ${M}_j$ has at least $\nu+1$ private rows}
\right\rbrace, \\
&\mathcal{N}_{2, k} \coloneqq \left\lbrace j \in [n] \ | \ w_j = d+\nu+1 \ \textnormal{and ${M}_j$ has $k$ private rows}
\right\rbrace  \ \textnormal{for \ $0 \leq k \leq \nu$}.
\end{align*}
Note that by construction, $\mathcal{N}_1 \cup (\cup_{0 \leq k \leq \nu} \mathcal{N}_{2, k}) = [n]$ and $\mathcal{N}_i \cap \mathcal{N}_j = \emptyset$, hence $n = n_1 + \sum_{0 \leq k \leq \nu} {n}_{2, k}$ where we denote $n_1 \coloneqq \vert \mathcal{N}_1 \vert$ and ${n}_{2, k} \coloneqq \vert \mathcal{N}_{2, k} \vert$ for $0 \leq k \leq \nu$ respectively. In the following, we will bound the size of these sets.

We note that each column in the set $\mathcal{N}_1$ has at least $\nu + 1$ private rows and each column in the set $\mathcal{N}_{2, k}$ has $k$ private rows for $0 \leq k \leq \nu$ respectively. Therefore, we have at least $\alpha \coloneqq (\nu+1) n_1 + \sum \limits_{k=0}^{\nu} k n_{2, k}$ private rows. Since a private row cannot be shared by two distinct columns and there could be at most $t$ private rows, we have
$0 \leq \alpha \leq t$.

Let us now fix a $0 \leq k \leq \nu$ and consider the set $\mathcal{N}_{2, k}$. Take any column $M_i \in \mathcal{N}_{2, k}$ if $ \mathcal{N}_{2, k} \neq \emptyset$. Note that $M_i$ has $k$ private rows and $w_i = d + \nu + 1$. Considering the rest of $d + \nu + 1 - k$ non-private rows, we claim that all size-2 sets of positions of ones must be private. We prove this by contradiction. Assume there exists another column covering any size-2 sets of positions of ones among $d + \nu + 1 - k$ non-private rows. It follows that excluding the covered pair of ones, among the rest of $d + \nu - 1 - k$ non-private rows, one can then find at most $d-1$ other columns covering $d-1$ rows. Therefore, there exists a set $S$ of columns with $|S| \leq d$ and $i \notin S$ such that their union covers $d + 1$ ones of $M_i$. Since $w_i \leq d + \nu + 1$, it follows that $|\textnormal{supp}(M_i) \backslash \cup_{j \in S} \textnormal{supp}(M_j)  | \leq \nu$ which contradicts with $(d, \nu)$-disjunctiveness. It follows that there are $\binom{d + \nu + 1 - k}{2}$ private size-2 sets of positions of ones for any $M_i \in \mathcal{N}_{2, k}$. By definition of a private set it cannot be shared by two distinct columns and we excluded the private rows in our calculation for the number of private size-2 sets, hence we have $\sum_{0 \leq k \leq \nu} n_{2, k} \binom{d +\nu+1-k}{2}$ private size-2 sets whereas there could be at most $\binom{t - \alpha}{2}$ private size-2 sets. Therefore, we have
\begin{align*}
\sum_{0 \leq k \leq \nu} n_{2, k} \binom{d+\nu+1-k}{2} \leq \binom{t-\alpha}{2} \leq \dfrac{(t - \alpha)^2}{2}.
\end{align*}
Hence, this gives
\begin{align*}
t & \geq  \sqrt{\sum_{0 \leq k \leq \nu} n_{2, k} (d+\nu+1-k)(d+\nu-k)} + \alpha
\\ & = \sqrt{\sum_{0 \leq k \leq \nu} n_{2, k} (d+\nu+1-k)(d+\nu-k)} + (\nu+1) n_1 + \sum \limits_{k=0}^{\nu} k n_{2, k}
\end{align*}
with the condition $n = n_1 + \sum_{0 \leq k \leq e} n_{2, k}$. We can also write this as 
\begin{align*}
t \geq \sqrt{\sum_{0 \leq k \leq \nu} n_{2, k} (d+\nu+1-k)(d+\nu-k)} + (\nu+1) \left(n - \sum_{0 \leq k \leq \nu} n_{2, k} \right)+ \sum \limits_{k=0}^{\nu} k n_{2, k}
\end{align*}
with the condition $0 \leq \sum_{0 \leq k \leq \nu} n_{2, k} \leq n$.

Since this is a concave function over $\{n_{2, k}\}_{k=0}^{\nu}$ and $0 \leq \sum_{0 \leq k \leq \nu} n_{2, k} \leq n$ is a convex set, minimum is attained over one of the extreme points. Therefore, we obtain
\begin{align*}
t & \geq \min \left\{ (\nu+1)n, \min_{0\leq k \leq \nu} \sqrt{(d + \nu +1 - k)(d + \nu - k)n} + n k \right\} \\ & = \min \left\{ (\nu+1)n, \sqrt{(d + \nu)(d + \nu + 1)n} \right\}.
\end{align*}

\subsection{Proof of Theorem \ref{thm:sc_noisy_general}}
\label{proof:sc_noisy_general}

Let ${M}$ be a $t \times n$ $(d, \nu)$-{disjunct} matrix $w_{\max} \leq ld+\nu+1$. We define
\begin{align*}
&\mathcal{N}_1 \coloneqq \left\lbrace j \in [n] \ | \ w_j \leq d + \nu \ \textnormal{or} \ w_j = d + \nu + 1 \ \textnormal{and ${M}_j$ has at least one private row}
\right\rbrace, \\
&\mathcal{N}_i \coloneqq \lbrace j \in [n] \ | \  (i-2)d + \nu + 2\leq w_j \leq (i-1)d + \nu + 1 \ \textnormal{and ${M}_j$ has no private set of size $i-1$} \
\\ &  \ \ \ \ \ \ \ \ \ \ \ \ \ \ \
\textnormal{or}
\\ &  \ \ \ \ \ \ \ \ \ \ \ \ \ \ \
(i-1)d + \nu + 2 \leq w_j \leq i d + \nu + 1 \ \textnormal{and ${M}_j$ has at least one private set of size $i$} \rbrace,
\\ & \textnormal{for $i = 2, \ldots, l$},
\\ & \mathcal{N}_{l+1} \coloneqq \lbrace j \in [n] \ | (l-1)d + \nu + 2\leq w_j \leq l d + \nu + 1 \ \textnormal{and ${M}_j$ has no private set of size $l$} \rbrace.
\end{align*}
Note that by construction, $\mathcal{N}_1 \cup \ldots \cup \mathcal{N}_{l+1} = [n]$ and $\mathcal{N}_i \cap  \mathcal{N}_j = \emptyset$ for any $i, j \in [l+1]$ such that $i \neq j$, hence $n = |\mathcal{N}_1| + \ldots + |\mathcal{N}_{l+1}|$. In the following, we will bound the size of these sets.

Note that $|\mathcal{N}_1| \leq t$. Consider the sets $\mathcal{N}_i$ for $i = 2, \ldots, l$. For any column $j \in \mathcal{N}_i$, if we have $(i-1)d + \nu + 2 \leq w_j \leq i d + \nu + 1$, then by construction ${M}_j$ has at least one private set of size $i$. For the case $(i-2)d + \nu + 2\leq w_j \leq (i-1)d + \nu + 1$, using similar arguments as in the proof of Theorem \ref{thm:sparsecodewords} one can show that all the sets of positions of ones of size $i$ must be private for the column ${M}_j$. Hence, all the columns in the set ${N}_i$ must have at least one private set of size $i$. Since the private sets cannot be shared among columns and we have at most $\binom{t}{i}$ private sets of size $i$, it yields that $|\mathcal{N}_i| \leq \binom{t}{i}$.  For the last set $\mathcal{N}_{l+1}$, similar arguments apply and for each column all the set of positions of ones of size $l+1$ must be private. Since $w_j \geq (l-1)d + \nu + 2$ for $j \in \mathcal{N}_{l+1}$, we have $|\mathcal{N}_{l+1}| \binom{(l-1)d + \nu + 2}{l+1} \leq \binom{t}{l+1}$. Therefore,
\begin{align*}
n & = |\mathcal{N}_1| + \ldots + |\mathcal{N}_{l+1}| \\ & \leq \sum \limits_{i=1}^{l} \binom{t}{i}  + \dfrac{\binom{t}{l+1}}{\binom{(l-1)d + \nu + 2}{l+1}}
\\ & \overset{(i)}{\leq}
\left( \dfrac{{e} t}{l} \right)^l + \dfrac{t \ldots (t - l)}{ ((l-1)d + \nu + 2) \ldots ((l-1)(d-1) + \nu + 1)}
\\ & \overset{(ii)}{\leq}
\dfrac{{e}^l t^l}{l^l}  + \dfrac{t^{l+1}}{ ((l-1)(d-1) + \nu)^{l+1} }
\\ & \overset{(iii)}{\leq}
\dfrac{{e}^l t^{l}}{(l-1)^l} \dfrac{2t}{(d+v)^2}  + \dfrac{t^{l+1}}{ ((l-1)(d-1) + \nu)^{l+1} }
\\ & = t^{l+1} \left( \dfrac{2 e^l}{(d+v)^2 (l-1)^l} + \dfrac{1}{((l-1)(d-1) + \nu)^{l+1}} \right)
\end{align*}
where $(i)$ is due to the inequality $\sum \limits_{i=0}^{l} \binom{t}{i} \leq \left( \frac{e t}{l} \right)^l$
for $t \geq l \geq 1$, $(ii)$ is bounding all the terms in the numerator by $t$ and denominator by $(l-1)(d-1) + \nu$ and in $(iii)$  we use $t \geq \binom{d+v+2}{2} \geq \frac{(d+v)^2}{2}$. This completes the lower bound.

\subsection{Proof of Theorem \ref{thm:decoding_noisy}}
\label{proof:decoding_noisy}

The decoding procedure follows what is described in the proof of Theorem \ref{thm:decoding}.  For each $i \in [t_q]$, we create the sets $S_i \subseteq [q]$ such that $S_i$ consists of the set of position of ones in $Y_i$. Due to the noise, we can only guarantee that $\lvert S_i \rvert \leq d + \nu$, for every $i \in [t_q]$ in this case. Similarly, for any defective item, the corresponding codeword $(c_1, \ldots, c_{t_q})$ in the outer code must satisfy $|\{i : c_i \in S_i \}| \geq t_q - \nu$ and for any non-defective item, the corresponding codeword $(c_1, \ldots, c_{t_q})$ in the outer code will include at least $(l+1)\nu \geq 2\nu$ symbols $c_i$ such that $c_i \notin S_i$. Note that this step can be done in $O(t)$ time.

The second step is to output all codewords $(c_1, \ldots, c_{t_q})$ in the outer code such that
$|\{i : c_i \in S_i \}| \geq t_q - \nu$ given $S_i \subseteq [q]$ with $\lvert S_i \rvert \leq d + \nu$ for every $1 \leq i \leq t_q$. This problem is an instance of the list recovery problem \cite{rudra_lr, indyk2001, zuckerman2004} and it can be solved in polynomial time for a $[t_q, k_q, t_q-k_q+1]_q$ Reed-Solomon code as long as $t_q - \nu > \sqrt{(k_q - 1) (d+\nu) t_q}$. We note that in our case, we have $t_q = l d + (l+2) \nu + 1$, and $k_q = l+1$ which satisfies the requirement.

\end{document}